\newtheorem{definition}{Definition}
\newtheorem{theorem}{Theorem}
\newtheorem{lemma}[theorem]{Lemma}
\newtheorem{proposition}[theorem]{Proposition}
\newcommand{\E}{\mathbb{E}}
\newcommand{\pdf}{\operatorname{pdf}}
\newcommand{\var}{\operatorname{Var}}
\newcommand{\cov}{\operatorname{Cov}}
\newcommand{\cdf}{\operatorname{cdf}}
\newcommand{\hp}{\bar{\mathbcal{p}}}
\DeclareMathOperator{\hr}{\bar{\mathbcal{r}}}
\DeclareMathOperator{\hhq}{\bar{\mathbcal{q}}}
\newcommand{\prob}{\mathbb{P}}
\definecolor{LightCyan}{rgb}{0.88,1,1}
\definecolor{OliveGreen}{rgb}{0,0.6,0}
\begin{document}
\title{HyperMinHash: MinHash in LogLog space}

\author{Yun~William~Yu,
        Griffin~M.~Weber% <-this % stops a space
\IEEEcompsocitemizethanks{\IEEEcompsocthanksitem Y. Yu is with the Department of Mathematics,
University of Toronto, Toronto, Ontario, Canada M5S 2E4.
G. Weber is
with the Department of Biomedical Informatics, Harvard Medical School, Boston, MA 02115
\protect\\
E-mail: ywyu@math.toronto.edu and griffin\_weber@hms.harvard.edu}% <-this % stops an unwanted space
\thanks{Manuscript received July 6, 2018; revised July 12, 2019}
}

% The paper headers
\markboth{Journal of \LaTeX\ Class Files,~Vol.~14, No.~8, August~2015}%
{Shell \MakeLowercase{\textit{et al.}}: Bare Demo of IEEEtran.cls for Computer Society Journals}

\IEEEtitleabstractindextext{%
\begin{abstract}
    In this extended abstract, we describe and analyze a lossy compression of MinHash from buckets of size $O(\log n)$ to buckets of size $O(\log\log n)$ by encoding using floating-point notation.
    This new compressed sketch, which we call HyperMinHash, as we build off a HyperLogLog scaffold, can be used as a drop-in replacement of MinHash.
    Unlike comparable Jaccard index fingerprinting algorithms in sub-logarithmic space (such as b-bit MinHash), HyperMinHash retains MinHash's features of streaming updates, unions, and cardinality estimation.
    For a multiplicative approximation error $1+ \epsilon$ on a Jaccard index $ t $, given a random oracle, HyperMinHash needs $O\left(\epsilon^{-2} \left( \log\log n + \log \frac{1}{ t  \epsilon} \right)\right)$ space.
    HyperMinHash allows estimating Jaccard indices of 0.01 for set cardinalities on the order of $10^{19}$ with relative error of around 10\% using 64KiB of memory; MinHash can only estimate Jaccard indices for cardinalities of $10^{10}$ with the same memory consumption.
\end{abstract}

% Note that keywords are not normally used for peerreview papers.
\begin{IEEEkeywords}
min-wise hashing, hyperloglog, sketching, streaming, compression
%Computer Society, IEEE, IEEEtran, journal, \LaTeX, paper, template.
\end{IEEEkeywords}}

% make the title area
\maketitle

\IEEEdisplaynontitleabstractindextext

\IEEEpeerreviewmaketitle

\IEEEraisesectionheading{\section{Introduction}\label{sec:introduction}}

\IEEEPARstart{M}{any} questions in data science can be rephrased in terms of the number of items in a database that satisfy some Boolean formula, or by the similarity of multiple sets as measured through the relative overlap.
For example, \textit{``how many participants in a political survey are independent and have a favorable view of the federal government?''}, or \textit{``how similar are the source IPs used in a DDoS attack today vs.\ last month?''}
The MinHash sketch, developed by Broder in 1997 \cite{broder1997resemblance}, is standard technique used by industry to answer these types of questions \cite{li2005using}.
MinHash uses $O(\epsilon^{-2} \log n)$ space, where $n$ is the number of unique items,
allows streaming updates, and directly estimates both the union cardinality and Jaccard index \cite{jaccard1902lois}.
Further, by coupling together Jaccard index with union cardinality, intersection sizes are also accessible.

The literature provides sub-logarithmic near-optimal probabilistic data structures for approximating the `count-distinct' problem 
\cite{bar2002counting, durand2003loglog, flajolet2004counting, flajolet2007hyperloglog, kane2010optimal}.
Because these sketches can be losslessly merged to find the sketch of the union of sets, this enables determining the cardinality of unions of sets.
Rephrased as a query on items satisfying a Boolean formula, unions enable OR queries.
Similarly, sub-logarithmic near-optimal fingerprints exist for computing Jaccard index \cite{jaccard1902lois, bachrach2010fast, li2010b, bachrach2015fingerprints}.
When paired with a count-distinct sketch, this allows AND queries by looking at the intersection cardinalities of sets.
Unfortunately, in general, Jaccard index fingerprints (such as b-bit MinHash \cite{li2010b}) cannot be merged to form the Jaccard index of a union of sets.
While queries of the form $|A \cap B|$ for sets $A$ and $B$ can be performed by combining together a count-distinct sketch and a Jaccard index fingerprint, more complicated queries, such as $|A \cap (B \cup C)|$ cannot.
One of the major advantages of MinHash is that as a merge-able sketch providing both count-distinct and Jaccard index estimators, it can be composed in this fashion.

To our knowledge, there are no practical sub-logarithmic sketches to replace MinHash in that problem space, because existing sub-logarithmic Jaccard index fingerprints cannot be merged.
For this reason, much work has gone into improving the performance characteristics of MinHash sketches.
These advances include requiring only a single permutation \cite{li2012one} and showing that very minimal independence is actually needed in the random hash function \cite{thorup2013bottom, feigenblat2011exponential, feigenblat2017dk}.
Here, we build on prior work and use a LogLog counter to give a lossy compression of the minimum hash values that reduces the required storage size for each hash from $O(\log n)$ to $O(\log \log n)$.

\subsection{MinHash}
Given two sets $A$ and $B$, where $|A| = n$ and $|B| = m$, and $n > m$, the Jaccard index is defined as
\begin{equation}
t(A,B) = \frac{|A \cap B|}{|A \cup B|}.
\end{equation}
Clearly, if paired with a good count-distinct estimator for $|A \cup B|$, this allows us to estimate intersection sizes as well.
Though Jaccard originally defined this index to measure ecological diversity in 1902 \cite{jaccard1902lois}, in more modern times, it has been used as a proxy for the document similarity problem.
In 1997, Broder introduced \textit{min-wise hashing} (colloquially known as `MinHash') \cite{broder1997resemblance}, a technique for quickly estimating the resemblance of documents by looking at the Jaccard index of `shingles' (collections of phrases) contained within documents.

MinHash Jaccard-index estimation relies on a simple fact: if you apply a random permutation to the universe of elements, the chance that the smallest item under this permutation in sets $A$ and $B$ are the same is precisely the Jaccard index.
To see this, consider a random permutation of $A \cup B$.
The minimum element will come from one of three disjoint sets: $A \setminus B$, $B \setminus A$, or $A \cap B$.
If the minimum element lies in $A \setminus B$, then $\min(A) \not \in B$, so $\min(A) \ne \min(B)$; the same is of course true by symmetry for $B \setminus A$.
Conversely, if $\min(A \cup B) \in A \cap B$, then clearly $\min(A) = \min(B)$.
Because the permutation is random, every element has an equal probability of being the minimum, and thus
\begin{equation}
\prob\left(\min(A) = \min(B)\right) = \frac{|A \cap B|}{|A \cup B|}.
\end{equation}

While using a single random permutation produces an unbiased estimator of $t(A,B)$, it is a Bernouli 0/1 random variable with high variance.
So, instead of using a single permutation, one can average $k$ trials.
The expected fraction of matches is also an unbiased estimator of the Jaccard index, but with variance decreased by a multiplicative factor of $1/k$.

Though the theoretical justification is predicated on having a true random permutation, in practice we approximate that by using hash functions instead.
A good hash function will specify a nearly total ordering on the universe of items, and provided we use $\theta(\log(n))$ bits for the hash function output space, the probability of accidental collision is exponentially small.

Though theoretically easy to analyze, this scheme has a number of drawbacks, chief amongst them the requirement of having $k$ random hash functions, implying a $\theta(nk)$ computational complexity for generating the sketch.
To address this, several variants of MinHash have been proposed \cite{cohen2016min}:
\begin{enumerate}
    \item \textbf{k-hash functions.} The scheme described above, which has the shortcoming of using $\theta(nk)$ computation to generate the sketch.
    \item \textbf{k-minimum values.} A single hash function is used, but instead of storing the single minimum value, we store the smallest $k$ values for each set (also known as the KMV sketch \cite{bar2002counting}). Sketch generation time is reduced to $O(n \log k)$, but we also incur an $O(k\log k)$ sorting penalty when computing the Jaccard index.
    \item \textbf{k-partition.} Another one-permutation MinHash variant, k-partition stochastically averages by first deterministically partitioning a set into $k$ buckets using the first couple bits of the hash value, and then stores the minimum value within each bucket \cite{li2012one}. $k$-partition has the advantage of $O(n)$ sketch generation time and $O(k)$ Jaccard index computation time, at the cost of some difficult in the analysis.
\end{enumerate}

$k$-partition one-permutation MinHash \cite{li2012one} is significantly more efficient in both time (as it uses only one hash function, and no sorting is required to compute the Jaccard index) and space (as the bits used to the partition the set into buckets can be stored implicitly).
For this reason, it is more commonly used in practice, and whenever we refer to MinHash in this paper, we mean $k$-partition one-permutation MinHash.

MinHash sketches of $A$ and $B$ can be losslessly combined to form the MinHash sketch of $A\cup B$ by taking the minimum values across buckets.
Additionally, using order statistics, it is possible to estimate the count-distinct cardinality \cite{bar2002counting}, so we can directly estimate union cardinalities using merged sketches, intersection cardinality by multiplying Jaccard index with union cardinality, and more complex set operations by rewriting the set as an intersection of unions (e.g. $|(A \cup B) \cap C|$).

\subsection{Count-distinct union cardinality}
All of the standard variants of MinHash given in the last section use logarithmic bits per bucket in order to prevent accidental collisions (i.e. we want to ensure that when two hashes match, they came from identical elements).
However, in the related problem of cardinality estimation of unique items (the `count-distinct' problem), literature over the last several decades produced several streaming sketches that require less than logarithmic bits per bucket.
Indeed, the LogLog, SuperLogLog, and HyperLogLog family of sketches requires only $\log\log(n)$ bits per bucket by storing only the position of the first 1 bit of a uniform hash function, and for a multiplicative error $\epsilon$, use a total of $O(\epsilon^{-2} \log\log n)$ bits \cite{durand2003loglog, flajolet2004counting, flajolet2007hyperloglog}.

The analyses of these methods all originally required access to a random oracle (i.e. a truly random hash function), and storing such hash functions requires an additional $O(\log n)$ bits of space, for a total of $O(\epsilon^{-2} \log\log n + \log n)$ space.
Further compacting of the hashes by use of concentration inequalities, and use of $k$-min-wise independent hash functions allowed both relaxing the requirement of a random oracle, and reduction of the space-complexity to $O(\epsilon^{-2} + \log n)$ (or without counting the hash function, $O(\epsilon^{-2} + \log\log n )$ in the setting of shared randomness) \cite{kane2010optimal}, resulting in an essentially optimal sketch requiring only constant bits per bucket.
In practice though, the double logarithmic HyperLogLog sketch is used under the assumptions of shared randomness and that standard cryptographic hash functions (e.g. SHA-1) behave as random oracles (clearly not true, but empirically good enough)---for real-world data sets, double logarithmic is small enough to essentially be a constant $<6$.

\subsection{Jaccard index estimation}
First we note that trivially, HyperLogLog union cardinalities can be used to compute intersection cardinalities and thus Jaccard index using the inclusion-exclusion principle.
Unfortunately, the relative error is then in the size of the union (as opposed to the size of the Jaccard index for MinHash) and compounds when taking the intersections of multiple sets; for small intersections, the error is often too great to be practically feasible, unless significantly more bits are used.
More precisely, in order to achieve the same relative error, the number of bits needed when using inclusion-exclusion scales with the square of the inverse Jaccard index, as opposed to scaling with just the inverse Jaccard index (Table 1 of \cite{pagh2014min}).
Notably, some newer cardinality estimation methods based on maximum-likelihood estimation are able to more directly access intersection sizes in HyperLogLog sketches, which can then be paired with union cardinality to estimate Jaccard index \cite{ertl2017new, ertl2017new2}.
However, this approach is restricted to the information available in the HyperLogLog sketch itself, and seems empirically to be a constant order ($<3$x) improvement over conventional inclusion-exclusion.

Alternately, when unions and streaming updates are not necessary, the literature provides many examples of Jaccard index fingerprints in better than log-space \cite{bachrach2010fast, li2010b, bachrach2015fingerprints}.
State-of-the-art fingerprinting methods based on either reduction to $F_2$-norm sketching or truncating bits of MinHash (`$b$-bit MinHash') are able to achieve space bounds of $O(\epsilon^{-2})$ in the general problem \cite{bachrach2010fast, li2010b} and  $O\left(\frac{(1-t)^2}{\epsilon^{2}} \log\frac{1}{1-t}\right)$ for highly similar streams when the Jaccard index $t$ is large \cite{bachrach2015fingerprints}.
For estimating the Jaccard similarity between two sets, these fingerprinting techniques are basically asymptotically optimal \cite{pagh2014min}.

However, $b$-bit MinHash and $F_2$-norm reductions, while great for Jaccard index, lose many of the benefits of standard MinHash, even just for Jaccard index estimation.
Because $b$-bit MinHash only takes the lowest order $b$ bits of the minimum hash value \textit{after} finding the minimum, it also requires $\log(n)$ bits per bucket during the sketch generation phase, the same as standard MinHash.
This also implies a lack of mergeability: the fingerprint of the union of two sets cannot be computed from the fingerprints of the two constituent sets.
The same holds true for $F_2$-norm reductions because of double counting shared items.

\subsection{Double logarithmic MinHash}
We aim to reduce space-complexity for MinHash, while preserving all of its features; we focus on Jaccard index in the unbounded data stream model, as that is the primary feature differentiating MinHash from count-distinct sketches.
As a preliminary, we note that as with the count-distinct problem, much theoretical work has focused on reducing the amount of randomness needed by MinHash to only requiring $k$-min-wise (or $k-d$-minwise) independent hash functions \cite{feigenblat2011exponential, thorup2013bottom, feigenblat2017dk}.
In this paper, we will however again assume the setting of shared randomness and the existence of a random oracle, which is standard in industry practice.

We construct a \textit{HyperMinHash} sketch by compressing a MinHash sketch using a floating-point encoding, based off of HyperLogLog scaffold;
using HyperLogLog as a scaffold for other statistics of interest has been previously proposed, but not for MinHash \cite{cohen2017hyperloglog}.
HyperMinHash as we describe below requires
\[O\left(\epsilon^{-2} \left( \log\log n + \log \frac{1}{t \epsilon} \right)\right)\]  space and has all of the standard nice features of MinHash, including streaming updates, the ability to take unions of sketches, and count-distinct cardinality estimation.
Though this construction is not space optimal---e.g. we could very likely drop in the KNW sketch \cite{kane2010optimal} instead of HyperLogLog \cite{flajolet2007hyperloglog} for a space-reduction---it is a practical compromise and easy to implement in software.

\begin{figure*}[tb]
    \includegraphics[width=1\textwidth]{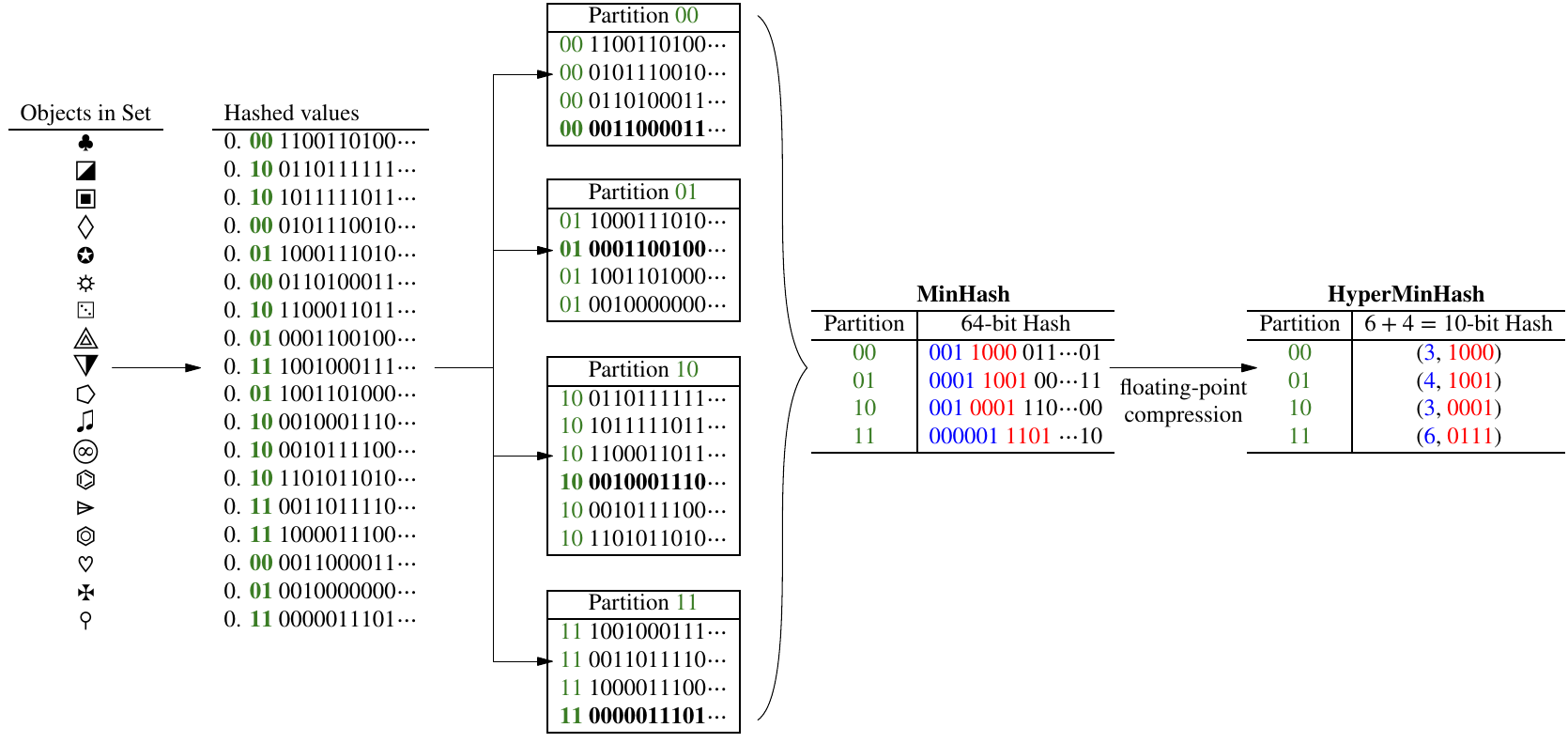}
    \caption{HyperMinHash generates sketches in the same fashion as one-permutation k-partition MinHash, but stores the hashes in floating-point notation. It begins by hashing each object in the set to a uniform random number between 0 and 1, encoded in binary. Then, the hashed values are partitioned by the first $p$ bits (above, 2 bits, in green), and the minimum value within each partition is taken. For ordinary MinHash, the procedure stops here, and a fixed number of bits (above, 64 bits) after the first $p$ bits are taken as the hash value. For HyperMinHash, each value is further lossily compressed as an exponent (blue) and a mantissa (red); the exponent is the position of the leftmost 1 bit in the next $2^q$ bits, and $2^q + 1$ otherwise (above, $q=6$). The mantissa is simply the value of the next $r$ bits in the bit-string (above, $r=4$).} 
    \label{fig:hyperminhash}
\end{figure*}

\section{Methods}
MinHash works under the premise that two sets will have identical minimum value with probability equal to the Jaccard index,
because they can only share a minimum value if that minimum value corresponds to a member of the intersection of those two sets.
If we have a total ordering on the union of both sets, the fraction of equal buckets is an unbiased estimator for Jaccard index.
However, with limited precision hash functions, there is some chance of accidental collision.
In order to get close to a true total ordering, the space of potential hashes must be on the order of the size of the union, and hence we must store $O(\log n)$ bits.

However, the minimum of a collection of uniform $[0,1]$ random variables $X_1, \ldots, X_n$ is much more likely to be a small number than a large one (the insight behind most count-distinct sketches \cite{bar2002counting}).
HyperMinHash operates identically to MinHash, but instead of storing the minimum values with fixed precision, it uses floating-point notation, which increases resolution when the values are smaller by storing an exponent and a mantissa.
We can compute the exponent of a binary fraction by taking the location of first nonzero bit in the binary expansion (the HyperLogLog part), and the mantissa is simply some fixed number of bits beyond that (the MinHash part).
More precisely, after dividing up the items into $k$ partitions, we store the position of the leading 1 bit in the first $2^q$ bits (and store $2^q + 1$ if there is no such 1 bit) and $r$ bits following that (Figure \ref{fig:hyperminhash}).
We do not need a total ordering so long as the number of accidental collisions in the minimum values is low.

To analyze the performance of HyperMinHash compared to random-permutation MinHash (or equivalently 0-collision standard MinHash) it suffices to consider the expected number of accidental collisions.
We first describe an intuitive analysis before proving things rigorously.
Here, we will also only analyze the simple case of collisions while using only a single bucket, but the same flavor of argument holds for partitioning into multiple buckets.
The HyperLogLog part of the sketch results in collisions whenever two items match in order of magnitude (Figure \ref{fig:hll-buckets}).
\begin{figure}
    \includegraphics[width=1\columnwidth]{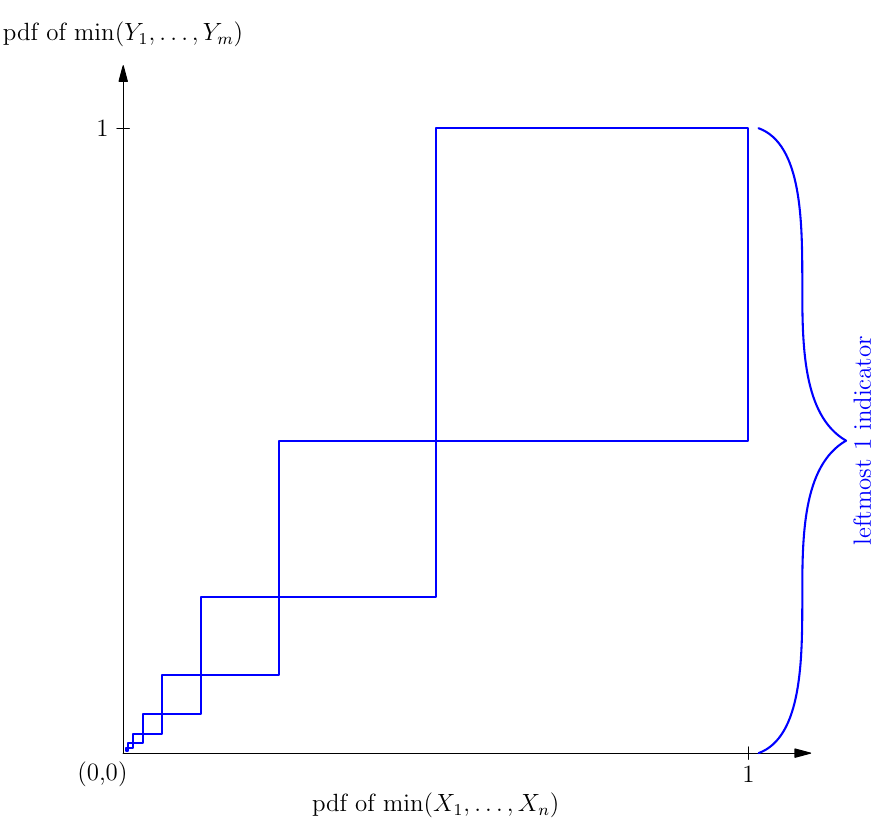}
    \caption{HyperLogLog sections, used alone, result in collisions whenever the minimum hashes match in order of magnitude.}
    \label{fig:hll-buckets}
\end{figure}

By pairing it with an addition $r$-bit hash, our collision space is narrowed by a factor of about $2^r$ within each bucket (Figure \ref{fig:hmh-buckets}).
An explicit exact formula for the expected number of collisions is
\begin{align}
\nonumber    \E C  = \sum_{i=1}^\infty \sum_{j=0}^{2^r-1} 
        & \left[\left(1-\frac{2^r+j}{2^{i+r}}\right)^n
             -\left(1-\frac{2^r+j+1}{2^{i+r}}\right)^n\right] \cdot \\ &
          \left[\left(1-\frac{2^r+j}{2^{i+r}}\right)^m
             -\left(1-\frac{2^r+j+1}{2^{i+r}}\right)^m\right] ,
\end{align}
though finding a closed formula is rather more difficult.

%Intuitively, suppose that our hash value is (\textit{12}, {01011101}) for partition \textbf{01}.
%This implies that the original bit-string of the minimum hash was $0.\textbf{01}\textit{000000000001}{01011101}\cdots$.
%Then a uniform random hash in $[0,1]$ collides with this number with probability 
%$2^{-(\textbf{2} + \textit{12} + {8})} = 2^{-21} $.
Intuitively, suppose that our hash value is (\textcolor{blue}{12}, \textcolor{red}{01011101}) for partition \textcolor{OliveGreen}{01}.
This implies that the original bit-string of the minimum hash was $0.\textcolor{OliveGreen}{01}\textcolor{blue}{000000000001}\textcolor{red}{01011101}\cdots$.
Then a uniform random hash in $[0,1]$ collides with this number with probability 
$2^{-(\textcolor{OliveGreen}{2} + \textcolor{blue}{12} + \textcolor{red}{8})} = 2^{-21} $.
So we expect to need cardinalities on the order of $2^{21}$ before having many collisions.
Luckily, as the cardinalities of $A$ and $B$ increase, so does the expected value of the leading 1 in the bit-string, as analyzed in the construction of HyperLogLog \cite{flajolet2007hyperloglog}.
Thus, the collision probabilities remain roughly constant as cardinalities increase, at least until we reach the precision limit of the LogLog counters.

\begin{figure}[tbh]
    \includegraphics[width=1\columnwidth]{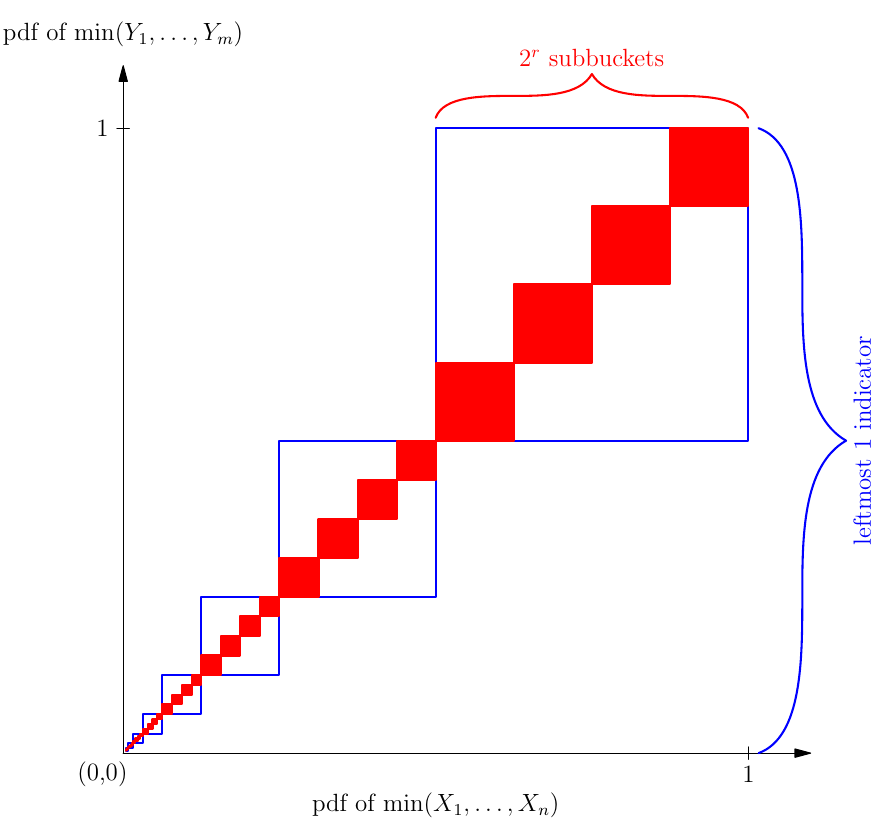}
    \caption{HyperMinHash further subdivides HyperLogLog leading 1-indicator buckets, achieving a much smaller collision space, so long as we precisely store the position of the leading 1.}
    \label{fig:hmh-buckets}
\end{figure}

But of course, we store only a finite number of bits for the leading 1 indicator (often 6 bits).
Because it is a LogLog counter, storing 6 bits is sufficient for set cardinalities up to $O(2^{2^6} = 2^{64})$.
This increases our collision surface though, as we might have collisions in the lower left region near the origin (Figure \ref{fig:hmh-in-practice}).
We can directly compute the collision probability (and similarly the variance) by summing together the probability mass in these boxes, replacing the infinite sum with a finite sum (Lemma \ref{thm:collision}).
For more sensitive estimations, we can subtract the expected number of collisions to debias the estimation.
Later, we will prove bounds on the expectation and variance in the number of collisions.

\subsection{Implementation details}
\label{app:algorithms}
Here, we present full algorithms to match a naive implementation of HyperMinHash as described above.
A Python implementation by the authors of this paper is available at \url{https://github.com/yunwilliamyu/hyperminhash}.
    Additionally, Go and Java implementations based on this pseudo-code were kindly provided by Seif Lotfy at \url{https://github.com/axiomhq/hyperminhash} and Sherif Nada at \url{https://github.com/LiveRamp/HyperMinHash-java}.
    \subsubsection{HyperMinHash sketch}
    The procedure for generating a HyperMinHash sketch can be thought of as a standard one-permutation k-partition MinHash sketch \cite{li2012one}, but where the minimum hash value is stored in a floating-point encoding.
    Programmatically though, it is easier to directly split the hash value into three parts: (1) the bucket ID, (2) the negative exponent of the bucket value, and (3) the mantissa of the bucket value.
    \label{alg:hyperminhash}
    \begin{algorithmic}[1]
        \State Let $h_1, h_2, h_3: D \to [0,1] \equiv \{0, 1\}^{\infty}$ be three independent hash functions hashing data from domain $D$ to the binary domain. (In practice, we generally use a single Hash function, e.g. SHA-1, and use different sets of bits for each of the three hashes).
        \State Let $\rho(s)$, for $s \in \{0, 1\}^\infty$ be the position of the left-most 1-bit ($\rho(0001\cdots) = 4$).
        \State Let $\sigma(s, n)$ for $s \in \{0, 1\}^\infty$ be the left-most $n$ bits of $s$ ($\sigma(01011\cdots, 5) = 01011)$.
        \Function{HyperMinHash}{$A, p, q, r$}
            \State Let $\hat{h}_1 (x) = \sigma(h_1(x), p)$.
            \State Let $\hat{h}_2(x) = \min(\rho(h_1(x)), 2^q)$.
            \State Let $\hat{h}_3(x) = \sigma(h_3(x), r)$.
            \State Initialize $2^p$ tuples $B_1 = B_2 = \cdots = B_{2^p} = (0, 0)$.
            \For {$a \in A$}
                \If {$B_{h_1(a)}[0] < \hat{h}_2(a)$}
                    \State $B_{\hat{h}_1(a)} \gets (\hat{h}_2(a), \hat{h}_3(a))$
                \Else
                    \If {$B_{\hat{h}_1(a)}[0] = \hat{h}_2(a)$ \textbf{and} $B_{\hat{h}_1(a)}[1] > \hat{h}_3(a)$}
                        \State $B_{\hat{h}_1(a)} \gets (\hat{h}_2(a), \hat{h}_3(a))$
                    \EndIf
                \EndIf
            \EndFor {}
            \State \Return $B_1, \ldots, B_{2^p}$ as $B$
        \EndFunction
    \end{algorithmic}
    \subsubsection{HyperMinHash Union}
    Given HyperMinHash sketches of two sets $A$ and $B$, we can return the HyperMinHash sketch of $A \cup B$ by for each bucket, taking the maximum exponent; or if the exponents are the same, taking the minimum mantissa. In the floating-point interpretation of the bucket values, this is simply taking the minimum bucket value.
    \label{alg:union}
    \begin{algorithmic}
        \Function{Union}{$S, T$}
            \State \textbf{assert} $|S| = |T|$
            \For {$i \in \{1, \ldots, |S|\} $}
                \State Initialize $|S|$ tuples $B_1 = B_2 = \cdots = B_{|S|} = (0, 0)$.
                \If {$S_i[0] > T_i[0]$}
                    \State $B_i \gets S_i$
                \ElsIf {$S_i[0] < T_i[0]$}
                    \State $B_i \gets T_i$
                \ElsIf {$S_i[0] = T_i[0]$}
                    \If {$S_i[[1] < T_i[1]$}
                        \State $B_i \gets S_i$
                    \Else
                        \State $B_i \gets T_i$
                    \EndIf
                \EndIf
            \EndFor
            \State \Return $B_1, \ldots, B_{2^p}$ as $B$
        \EndFunction
    \end{algorithmic}
    \subsubsection{Estimating cardinality}
    \label{alg:hllcall} Note that the left parts of the buckets can be passed directly into a HyperLogLog estimator. We can also use other $k$-minimum value count-distinct cardinality estimators, which we empirically found useful for large cardinalities.
    \begin{algorithmic}
        \Function{EstimateCardinality}{$S, p, q, r$}
            \State Initialize $|S|$ integer registers $b_1 = b_2 = \cdots = b_{|S|} = 0$.
            \For {$i \in \{1, \ldots, |S|\} $}
                \State $b_i \gets S_i[0]$
            \EndFor
            \State $R \gets \textsc{HyperLogLogCardinalityEstimator}(\{b_i\}, q)$
            \If {$ R < 1024 |S| $}
                \State \Return $R$
            \Else
                \State Initialize $|S|$ real registers $r_1, \ldots, r_{|S|}$.
                \For {$i \in \{1, \ldots, |S|\} $}
                    \State $r_i \gets 2^{-S_i[0]} \cdot \left( 1 + \frac{S_i[1]}{2^r}    \right) $
                \EndFor
                \If {$\sum r_i = 0$}
                    \State \Return $\infty$
                \Else
                    \State \Return $|S|^2 / \sum r_i$
                \EndIf
            \EndIf
        \EndFunction
    \end{algorithmic}
    \subsubsection{Computing Jaccard index}
    \label{alg:jaccard} Given two HyperMinHash sketches $A$ and $B$, we can compute the Jaccard index $t(A,B) = |A \cap B|/|A \cup B|$ by counting matching buckets. Note that the correction factor $\E C$ is generally not needed, except for really small Jaccard index. Additionally, for most practical purposes, it is safe to substitute \textsc{ApproxExpectedCollisions} for \textsc{ExpectedCollisions} (algorithms to follow).
    \begin{algorithmic}
        \Function{JaccardIndex}{$S, T, p, q, r$}
            \State \textbf{assert} $|S| = |T|$
            \State $C \gets 0$, \quad $N \gets 0$
            \For {$i \in \{1, \ldots, |S|\} $}
                \If {$S_i = T_i$ \textbf{and} $S_i \ne (0,0)$} 
                    \State $C \gets C + 1$
                \EndIf
                \If {$S_i \ne (0,0)$ \textbf{or} $T_i \ne (0,0)$}
                    \State $N \gets N + 1$
                \EndIf
            \EndFor
            \State $n \gets \textsc{EstimateCardinality}(S, q)$
            \State $m \gets \textsc{EstimateCardinality}(T, q)$
            \State $\E C \gets \textsc{[Approx]ExpectedCollisions}(n, m, p, q, r)$
            \State \Return $(C - \E C) / N$
        \EndFunction
    \end{algorithmic}%
    \subsubsection{Computing expected collisions}
    The number of expected collisions given two HyperMinHash sketches of particular sizes can be computed from Lemma \ref{thm:collision}.
    \label{alg:collisions} Note that because of floating point error, BigInts must be used for large $n$ and $m$.
    For sensitive estimation of Jaccard index, this value can be used to debias the estimator in Algorithm \ref{alg:jaccard}.
    \begin{algorithmic}
        \Function{ExpectedCollisions}{$n, m, p, q, r$}
            \State $x \gets 0$
            \For {$i \in \{1, \ldots, 2^q\} $}
                \For {$j \in \{1, \ldots, 2^r\} $}
                    \If {$i \ne 2^q$}
                        \State $b_1 \gets \frac{2^r + j}{2^{p+r+i}}$, \quad $b_2 \gets \frac{2^r + j+1}{2^{p+r+i}}$
                    \Else
                        \State $b_1 \gets \frac{j}{2^{p+r+i-1}}$, \quad $b_2 \gets \frac{j+1}{2^{p+r+i-1}}$
                    \EndIf
                    \State $Pr_x \gets (1-b_2)^n - (1-b_1)^n$
                    \State $Pr_y \gets (1-b_2)^m - (1-b_1)^m$
                    \State $x \gets x + Pr_x Pr_y$
                \EndFor
            \EndFor
            \State \Return $x \cdot 2^p$
        \EndFunction
    \end{algorithmic}
    \subsubsection{Approximating Algorithm \ref{alg:collisions}}
    \label{alg:approxcollisions}
    Here we present a fast numerically stable approximation to Algorithm \ref{alg:collisions}, which generally underestimates collisions.
    We discuss it in more detail in \ref{ssc:empirical}.
    \begin{algorithmic}
        \Function{ApproxExpectedCollisions}{$n, m, p, q, r$}
            \If{$n < m$}
                \State SWAP($x, y$)
            \EndIf
            \If{$n > 2^{2^q + r}$}
                \State \Return ERROR: cardinality too large for approximation.
            \ElsIf{$n > 2^{p+5}$}
                \State $\phi \gets \frac{4n/m}{(1+n/m)^2}$
                \State \Return $0.169919487159739093975315012348 \cdot 2^{p-r} \phi $
            \Else
                \State \Return \textsc{ExpectedCollisions}($n,m,p,q,0$) $\cdot 2^{-r}$
            \EndIf
            \State \Return $x \cdot 2^p$
        \EndFunction
    \end{algorithmic}

\subsection{Empirical optimizations}
\label{ssc:empirical}
We recommend several optimizations for practical implementations of HyperMinHash.
First, it is mathematically equivalent to:
\begin{enumerate}
    \item Pack the hashed tuple into a single word; this enables Jaccard index computation while using only one comparison per bucket instead of two.
    \item Use the $\max$ instead of $\min$ of the sub-buckets. This allows us take the union of two sketches while using only one comparison per bucket.
\end{enumerate}
These recommendations should be self-explanatory, and are simply minor engineering optimizations, which we do not use in our prototyping, as they do not affect accuracy.

However, while we can exactly compute the number of expected collisions through Lemma \ref{thm:collision}, this computation is slow and often results in floating point errors unless BigInts are used because Algorithm \ref{alg:collisions} is exponential in $r$.
In practice, two ready solutions present themselves:
\begin{enumerate}
    \item We can ignore the bias and simply add it to the error. As the bias and standard deviation of the error are the same order of magnitude, this only doubles the absolute error in the estimation of Jaccard index. For large Jaccard indexes, this does not matter.
    \item We also present a fast, numerically stable, algorithm to approximate the expected number of collisions (Algorithm \ref{alg:approxcollisions}).
\end{enumerate}

We can however approximate the number of expected collisions using the following procedure, which is empirically asymptotically correct (Algorithm \ref{alg:approxcollisions}):
\begin{enumerate}
	\item For $ n < 2^{p+5}$, we approximate by taking the number of expected HyperLogLog collisions and dividing it by $2^r$. In each HyperLogLog box, we are interested in collisions along $2^r$ boxes along the diagonal \ref{fig:hmh-in-practice}. For this approximation, we simply assume that the joint probability density function is almost uniform within the box; this is not completely accurate, but pretty close in practice.
	\item For $2^{p+5} <n < 2^{2^q + p}$, we noted empirically that the expected number of collisions approached $0.1699 \cdot 2^{p-r}$ for $n = m$ as $n \to \infty$. Furthermore, the number of collisions is dependent on $n$ and $m$ by a factor of $\frac{4nm}{(n+m)(n+m-1)}$ from \ref{thm:ray}, which for $n, m \gg 1$ can be approximated by $\frac{4n/m}{(1+(n/m)^2}$. This approximation is primarily needed because of floating point errors when $n \to \infty$.
	\item Unfortunately, around $n > 2^{2^q + p}$, the number of collisions starts increasing and these approximations fail. However, note that for reasonable values of $q = 6, p = 15$, this problem only appears when $n > 2^{89} \approx 10^{26}$.
\end{enumerate}

\begin{figure}[tb]
    \includegraphics[width=1\columnwidth]{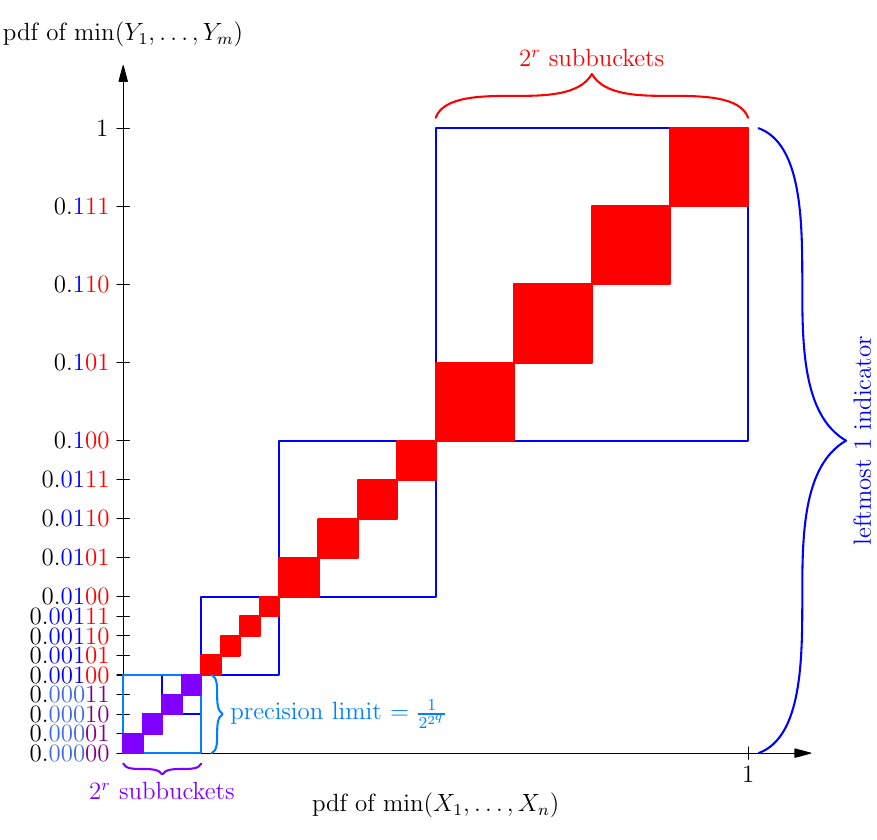}
    \caption{In practice, HyperMinHash has a limited number of bits for the LogLog counters, so there is a final lower left bucket at the precision limit.}
    \label{fig:hmh-in-practice}
\end{figure}

\subsection{Proofs}
The main result of this section bounds the expectation and variance of accidental collision, given two HyperMinHash sketches of disjoint sets.
First, we rigorously define the full HyperMinHash sketch as described above.
Note that in our proofs, we will operate in the unbounded data stream model and assume both a random oracle and shared randomness.

\begin{definition}
    We will define $f_{p,q,r}(A): \mathbb{S} \to \{\{1, \ldots, 2^q\} \times \{0, 1\}^r\}^{2^p} $ to be the HyperMinHash sketch constructed from Figure \ref{fig:hyperminhash}, where $A$ is a set of hashable objects and $p,q,r \in \mathbb{N}$, and let $f_{p,q,r}(A)_i: \mathbb{S} \to \{1, \ldots, 2^q\} \times \{0, 1\}^r$
    be the value of the $i$th bucket in the sketch.

    More precisely, let $h(x): \mathbb{S} \to [0, 1]$ be a uniformly random hash function.
    Let $\rho_q(x) = \min\left( \lfloor -\log_2(x) \rfloor + 1 , 2^q  \right) $,
    let $\sigma_r(x) = \lfloor x 2^r  \rfloor$, and
    let $\hat{h}_{q,r}(x) = \left( \rho_q(x), \sigma_r\left( x  2^{\rho_q(x)} - 1\right) \right)$.

    Then, we will define
    \begin{equation*}
        f_{p,q,r}(A)_i = \hat{h}_{q,r} \left[ \min_{\substack{a \in A \\ i2^{-p}  < h(a) < (i+1)2^{-p}}} \left(h(a) 2^p - i\right) \right].
    \end{equation*}
\end{definition}

\begin{definition}
    Let $A, B$ be hashable sets with $|A|=n$, $|B|=m$, $n>m$, and $A\cap B = \varnothing$.
    Then define an indicator variable for collisions in bucket $i$ of their respective HyperMinHash sketches
    \begin{equation}
            Z_{p,q,r}(A,B,i) = \mathbbm{1}_{\left(f_{p,q,r}(A)_i = f_{p,q,r}(B)_i \right)}.
    \end{equation}
\end{definition}

Our main theorems follow:
\begin{theorem}
    \label{thm:main}
    $C = \sum_{i=0}^{2^p-1} Z_{p,q,r}(A,B,i)$ is the number of collisions between the HyperMinHash sketches of two disjoint sets $A$ and $B$.
    Then the expectation 
    \begin{equation}
        \E C \le 2^p \left( \frac{5}{2^r} + \frac{n}{2^{p+2^q+r}}\right) .
    \end{equation}
\end{theorem}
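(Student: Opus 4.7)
The plan is to reduce $\E C$ to a per-bucket collision probability and then decompose that probability over the sub-intervals indexed by $(k,j)$, where $k \in \{1,\ldots,2^q\}$ is the HyperLogLog level and $j \in \{0,\ldots,2^r-1\}$ is the mantissa. By linearity of expectation and the symmetry of a uniform hash across the $2^p$ buckets, $\E C = 2^p \cdot p^*$, where $p^*$ is the collision probability in a single fixed bucket. Because $A \cap B = \varnothing$, the bucket minima of $A$ and $B$ are independent, so $p^* = \sum_{k,j} \prob(f_{p,q,r}(A)_0 = (k,j)) \prob(f_{p,q,r}(B)_0 = (k,j))$. Using that the rescaled bucket minimum $U$ has CDF $\prob(U \le x) = 1 - (1 - x/2^p)^n$, each factor equals a difference $(1-L_{k,j})^n - (1-L'_{k,j})^n$ at the sub-interval endpoints in absolute hash space. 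It then suffices to show $p^* \le 5/2^r + n/2^{p+2^q+r}$, which I would split into a non-overflow contribution ($k < 2^q$) and an overflow contribution ($k = 2^q$).

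For non-overflow, I would apply the mean-value-theorem bound $(1-L)^n - (1-L')^n \le n(L'-L)(1-L)^{n-1}$ to both factors. With sub-interval widths $w_k = 2^{-(p+k+r)}$ and lower endpoints $L_{k,j} \ge 2^{-(p+k)}$, each $(k,j)$-term is at most $nm\,w_k^2\,(1 - 2^{-(p+k)})^{n+m-2}$. Summing over the $2^r$ mantissa values at level $k$, substituting $t_k = (n+m-2)\cdot 2^{-(p+k)}$ (geometric in $k$ with ratio $1/2$), and using $(1-\gamma)^{n+m-2} \le e^{-(n+m-2)\gamma}$, the non-overflow contribution reduces to $\frac{nm}{(n+m-2)^2 \cdot 2^r} \sum_k t_k^2 e^{-t_k}$. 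A Riemann-sum comparison of $\sum_k t_k^2 e^{-t_k}$ against $\frac{1}{\ln 2}\int_0^\infty t\,e^{-t}\,dt = 1/\ln 2$, worst-cased over the geometric sequence's phase, bounds the sum by a small absolute constant; combining with $nm \le (n+m-2)^2$ for $n+m \ge 4$ yields a non-overflow bound of at most $5/2^r$.

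For the overflow level $k = 2^q$, the sub-intervals have width $w = 1/2^{p+r+2^q-1}$. A direct union bound gives $\prob(f_{p,q,r}(A)_0 = (2^q,j)) \le n w$; pairing with $\prob(f_{p,q,r}(B)_0 = (2^q,j)) \le 1$ and summing over the $2^r$ sub-intervals yields an overflow contribution on the order of $n/2^{p+2^q-1}$. Sharpening to the $n/2^{p+2^q+r}$ claimed in the theorem requires extracting an additional $1/2^r$ factor from the near-uniformity of the mantissa conditional on being in the overflow region, which follows by bounding the variation of the density $f(x) = (n/2^p)(1-x/2^p)^{n-1}$ of the bucket minimum on the narrow overflow interval $[0,\,2^{1-2^q}]$ (rescaled).

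The main obstacle is pinning down the absolute constant $5$ in the non-overflow bound: the sum $\sum_k t_k^2 e^{-t_k}$ depends on the phase of the geometric sequence $t_k$, so verifying that the discrete sum is bounded by a constant compatible with the theorem (after accounting for the ratio $nm/(n+m-2)^2 \le 1/4 + o(1)$) requires a careful but elementary worst-casing over a unit log-scale shift. The overflow analysis is conceptually easier but needs parallel attention to the conditional mantissa distribution to recover the tight $1/2^r$ factor.
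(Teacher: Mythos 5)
Your proposal is correct in outline but takes a genuinely different route from the paper. The paper never sums over the diagonal boxes directly: it covers them with four geometric regions --- a top-right box, a cone (``ray'') from the origin bounded by $y=\frac{2^r}{2^r+1}x$ and $y=\frac{2^r+1}{2^r}x$, a thin diagonal strip, and a corner box at the origin --- and bounds the probability mass of each by elementary integration (Lemmas \ref{thm:tr}--\ref{thm:strip}); for the multi-bucket theorem it rescales the strip and corner box by $2^{-p}$, keeps the scale-invariant cone, drops the top-right box (now absorbed into the cone), and applies linearity of expectation. The cone integral handles all HyperLogLog levels at once, which is exactly what spares the paper the phase analysis of the level-indexed sum. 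Your box-by-box decomposition, with the MVT bound and the comparison of $\sum_k t_k^2 e^{-t_k}$ (ratio-$2$ geometric $t_k$) against $\frac{1}{\ln 2}\int_0^\infty t e^{-t}\,dt$, is a legitimate alternative and in fact your flagged ``main obstacle'' is not one: the worst case of that sum over the phase is about $1.45$, so with $nm\le (n+m-2)^2$ for $n+m\ge 4$ your non-overflow contribution comes out near $1.5/2^r$, comfortably inside the theorem's $5/2^r$ (small $n+m$ is a trivial separate check). Your method actually yields a sharper constant for the diagonal part at the cost of the discrete worst-casing.

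The one genuine loose end is the overflow level $k=2^q$. As written, your union bound $\sum_j \prob(f(A)_0=(2^q,j))\cdot 1 \le 2^r\cdot nw$ loses a factor of $2^{r+1}$, and the proposed repair via ``near-uniformity of the conditional mantissa'' does not hold in the regime that matters: for $m$ comparable to $2^{p+2^q+r}$ the density $(1-x)^{m-1}$ decays substantially across the overflow interval, so the conditional mantissa is far from uniform. The correct and simpler fix uses monotonicity rather than uniformity: the per-sub-interval probabilities $P_A(j)$, $P_B(j)$ are decreasing in $j$, so $\sum_j P_A(j)P_B(j)\le P_A(0)\sum_j P_B(j)\le n\cdot 2^{1-(p+2^q+r)}\cdot 1$, recovering the $n/2^{p+2^q+r}$ term up to a factor of $2$ that your slack in the first term absorbs whenever $n\le 2^{p+2^q}$ (and for larger $n$ the bound is handled by noting the second term then dominates, or by the sharper $\sum_j P_B(j)\le \min(1, m\,2^{1-p-2^q})$). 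With that substitution your argument closes; note the paper's own corner-box lemma is comparably loose about constants in this region.
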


\begin{theorem}
    \label{thm:variance}
    Given the same setup as in Theorem \ref{thm:main},
        \[  \var(C) \le \E [C]^2 + \E [C]. \]
\end{theorem}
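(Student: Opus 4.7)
\emph{Plan.} I would start from the decomposition
\begin{equation*}
\var(C) = \sum_{i=0}^{2^p-1} \var(Z_i) + \sum_{i \ne j} \cov(Z_i, Z_j),
\end{equation*}
and bound the diagonal piece using the standard Bernoulli estimate $\var(Z_i) = \E[Z_i](1 - \E[Z_i]) \le \E[Z_i]$; summed over $i$ this is at most $\E[C]$, which already accounts for the linear term in the stated bound. The real content of the theorem is then to show that the off-diagonal covariance sum is at most $\E[C]^2$.

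For the off-diagonal terms, I would condition on the random partition $M$ of $A \cup B$ across the $2^p$ buckets induced by $\hat h_1$. Given $M$, the HyperMinHash value $f_{p,q,r}(A)_i$ depends only on the $\hat h_2, \hat h_3$-bits of the items landing in bucket $i$, and for $i \ne j$ those items are disjoint from the items in bucket $j$. Because $\hat h_2, \hat h_3$ are independent across distinct items and of $\hat h_1$, the indicators $Z_i$ are conditionally independent given $M$. The law of total covariance then reduces $\cov(Z_i, Z_j)$ to $\cov(X_i, X_j)$ where $X_i = \E[Z_i \mid M]$ is a function only of the bucket cardinalities $(N^A_i, N^B_i)$. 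The bucket populations form two independent multinomial vectors---a canonical negatively associated distribution---and combining this with the explicit form of $p(n,m)$ extracted from Lemma \ref{thm:collision} should yield $\sum_{i \ne j}\cov(X_i, X_j) \le \E[C]^2$, completing the proof.

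The main obstacle I anticipate is making that last step precise. The cleanest argument would invoke the standard closure theorem that monotone functions of negatively associated variables stay negatively correlated, but $p(n,m)$ is not obviously monotone: enlarging $n$ with $m$ held fixed drives the $A$-minimum into deeper HyperLogLog bins of progressively smaller width, which can actually reduce the match probability. Two workarounds suggest themselves. First, one can Poissonize by replacing the fixed cardinalities with independent Poisson variables of the same means, so the bucket sizes become genuinely independent Poissons and the $Z_i$ become exactly independent; then $\var(C) \le \E[C]$ holds in the Poissonized model and the slack $\E[C]^2$ in the stated bound easily absorbs the de-Poissonization error. Second, one can work directly from the closed-form expression for $\prob(Z_i = Z_j = 1)$ obtained by summing the product-region formula from Lemma \ref{thm:collision}, and verify the desired covariance bound by elementary multinomial second-moment arithmetic. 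Either route yields a bound at least as strong as claimed.
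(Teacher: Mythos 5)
Your skeleton coincides with the paper's: the paper likewise conditions on the multinomial occupancy of the $2^p$ buckets (its $\mathring\alpha,\mathring\beta$ is your $M$), uses the fact that distinct buckets see disjoint items and are therefore conditionally independent given the occupancy, and extracts the additive $\E[C]$ term from the per-bucket Bernoulli variances exactly as you do. The problem is that the only nontrivial part --- the off-diagonal bound $\sum_{i\ne j}\cov(Z_i,Z_j)\le \E[C]^2$ --- is left as a plan with two unexecuted options, and the option you analyze most carefully (closure of negative association under monotone maps) fails for precisely the reason you give. The Poissonization route is not a proof either: the sentence ``the slack $\E[C]^2$ easily absorbs the de-Poissonization error'' is itself the claim that needs an argument, because after your reduction via the law of total covariance the off-diagonal sum equals $\var\left(\E[C\mid M]\right)-\sum_i\var(X_i)$, so what must be shown is essentially $\var(\E[C\mid M])\le \E[C]^2$; that fluctuation of the conditional expectation across occupancy vectors is exactly the de-Poissonization error, and nothing in the proposal controls it.

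For comparison, the paper does not use negative association or Poissonization at all. It expands $\cov\bigl(\mathbbm{1}_{\mathring\alpha,\mathring\beta}X_{j_1},\,\mathbbm{1}_{\mathring\alpha,\mathring\beta}X_{j_2}\bigr)=\E[X_{j_1}]\E[X_{j_2}]\var(\mathbbm{1}_{\mathring\alpha,\mathring\beta})$ for $j_1\ne j_2$, bounds $\var(\mathbbm{1}_{\mathring\alpha,\mathring\beta})\le\prob(\mathring\alpha,\mathring\beta)$, and sums over buckets and occupancies to obtain $\sum_{\mathring\alpha,\mathring\beta}\prob(\mathring\alpha,\mathring\beta)\,\E[C\mid\mathring\alpha,\mathring\beta]^2$, which it then identifies with $\E[C]^2$. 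Your instinct that this is the delicate point is well founded: that sum is $\E\bigl[\E[C\mid M]^2\bigr]$, which by Jensen is at least $(\E[C])^2$ rather than equal to it, so even the paper's closing step tacitly relies on the conditional expectation not fluctuating much over the multinomial assignment. A complete argument along your lines would quantify that fluctuation directly --- for instance by bounding $\var(\E[C\mid M])$ using the explicit dependence of the per-bucket collision probability on the occupancy counts from Lemma \ref{thm:collision} --- rather than asserting that one of two sketched routes ``yields a bound at least as strong as claimed.'' As written, the central inequality of the theorem is named but not proved.
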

Theorem \ref{thm:main} allows us to correct for the number of random collisions before computing Jaccard distance, and Theorem \ref{thm:variance} tells us that the standard deviation in the number of collisions is approximately the expectation.

We will first start by proving a simpler proposition.
\begin{proposition}
    Consider a HyperMinHash sketch with only 1 bucket on two disjoint sets $A$ and $B$. i.e. $f_{0,q,r}(A)$ and $f_{0,q,r}(B)$.
    Let $\gamma(n,m) \sim Z_{0,q,r}(A,B,0)$. Naturally, as a good hash function results in uniform random variables, $\gamma$ is only dependent on the cardinalities $n$ and $m$.
    We claim that
    \begin{align}
        \E \gamma(n,m) \le \frac{6}{2^r} + \frac{n}{2^{2^q+r}}. 
    \end{align}
    \label{thm:gamma}
\end{proposition}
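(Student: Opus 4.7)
The plan is to decompose the collision expectation over the quantization cells of $\hat{h}_{q,r}$. Since $h$ behaves as a random oracle and $A \cap B = \varnothing$, the minima $X := \min_{a \in A} h(a)$ and $Y := \min_{b \in B} h(b)$ are independent, with densities $n(1-x)^{n-1}$ and $m(1-y)^{m-1}$ on $(0,1]$. For $i < 2^q$ the non-saturated cell $C_{ij} = [2^{-i}(1 + j \cdot 2^{-r}),\, 2^{-i}(1 + (j+1) \cdot 2^{-r}))$ has length $2^{-(i+r)}$; for $i = 2^q$ the saturated cell $C_{2^q, j} = [j \cdot 2^{-(2^q+r-1)},\, (j+1) \cdot 2^{-(2^q+r-1)})$ partitions $(0, 2^{-(2^q-1)}]$ into $2^r$ equal pieces. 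Writing $p_{ij}^{(n)} := \Pr[X \in C_{ij}]$ and using independence, I would start from
\[ \E \gamma(n,m) \;=\; \sum_{i=1}^{2^q-1} \sum_{j=0}^{2^r-1} p_{ij}^{(n)} p_{ij}^{(m)} \;+\; \sum_{j=0}^{2^r-1} p_{2^q, j}^{(n)} p_{2^q, j}^{(m)}. \]

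For the non-saturated contribution I would prove a per-bucket ``mantissa-averaging'' bound of the form $\sum_j p_{ij}^{(n)} p_{ij}^{(m)} \le (c/2^r)\, P_i^{(n)} P_i^{(m)}$, where $P_i^{(n)} := \sum_j p_{ij}^{(n)}$ is the total mass of exponent bucket $i$ and $c$ is an absolute constant. The observation is that within a single non-saturated bucket the density $n(1-x)^{n-1}$ is nearly flat, so the $2^r$ sub-cells each carry roughly the same share of $P_i^{(n)}$; formally, the mean value theorem applied to $(1-x)^n$ on each sub-cell writes $p_{ij}^{(n)} = n \cdot 2^{-(i+r)}(1-\xi_{ij})^{n-1}$, and bounding the density ratio across the bucket lets the Riemann sum of $(1-\xi_{ij})^{n-1}$ collapse to a constant times $P_i^{(n)}/(n \cdot 2^{-(i+r)})$. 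The product $p_{ij}^{(n)} p_{ij}^{(m)}$ then picks up exactly one factor of $1/2^r$ from the sub-cell length. Summing over $i$ and using the crude inequality $\sum_i P_i^{(n)} P_i^{(m)} \le \sum_i P_i^{(m)} \le 1$ yields a non-saturated total of order $1/2^r$, with the small absolute constant accommodating the density-ratio factor and a boundary slack summing to the claimed $6/2^r$.

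The saturated bucket is handled by a cruder bound: since $(1-x)^{n-1} \le 1$, the density of $X$ is pointwise at most $n$, so $p_{2^q, j}^{(n)} \le n \cdot 2^{-(2^q + r - 1)}$ for every $j$. Pulling this maximum out and using $\sum_j p_{2^q, j}^{(m)} = P_{2^q}^{(m)} \le 1$ gives
\[ \sum_j p_{2^q, j}^{(n)}\, p_{2^q, j}^{(m)} \;\le\; \frac{n}{2^{2^q + r - 1}} \cdot P_{2^q}^{(m)} \;\le\; \frac{n}{2^{2^q + r}}, \]
where the final halving either exploits that $P_{2^q}^{(m)} \le 1/2$ in the regime of interest or is transferred into the $6/2^r$ constant of the non-saturated estimate.

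The main obstacle is pinning down the constant $c$ in the non-saturated step: for small $i$ the density $(1-x)^{n-1}$ can vary across a bucket by factors as large as $2^{n-1}$, so a naive uniform bound on the density ratio fails. The way through is to note that exactly those problematic buckets have exponentially small mass, since $P_i^{(n)} \le (1-2^{-i})^n$ forces all $n$ samples to avoid $(0, 2^{-i}]$. A short case split -- handling buckets with $i \lesssim \log_2 n$ by the trivial bound $p_{ij}^{(n)} p_{ij}^{(m)} \le p_{ij}^{(n)} P_i^{(m)}/2^r$ and summing a geometric series in $P_i^{(n)}$, while treating the remaining ``typical'' buckets where the density ratio is bounded by a universal constant like $e$ -- then delivers the explicit constant $6$ and closes the proof.
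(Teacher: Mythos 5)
Your route---working directly with the discrete cell-sum $\sum_{i,j} p_{ij}^{(n)} p_{ij}^{(m)}$ and arguing that the density is nearly flat across the $2^r$ mantissa sub-cells of each exponent bucket---is genuinely different from the paper's, which instead covers all the diagonal collision cells by four geometric regions of the unit square (a top-right box, a triangular ray from the origin, a thin diagonal strip, and a box at the origin) and integrates the joint density $n(1-x)^{n-1}m(1-y)^{m-1}$ over each region separately (Lemmas \ref{thm:tr}--\ref{thm:strip}). Your starting identity is exactly the paper's Lemma \ref{thm:collision}, and your treatment of the saturated bucket is essentially sound, though it yields $2n/2^{2^q+r}$ rather than $n/2^{2^q+r}$, since $P_{2^q}^{(m)}$ can be arbitrarily close to $1$; that factor of $2$ cannot in general be hidden inside the $6/2^r$ term when $n \gg 2^{2^q}$.

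The genuine gap is in the non-saturated part. The per-bucket inequality $\sum_j p_{ij}^{(n)} p_{ij}^{(m)} \le (c/2^r)\, P_i^{(n)} P_i^{(m)}$ with an absolute constant $c$ is false: for $i=1$ and $n=m\gg 2^r$, essentially all of $P_1^{(n)}=2^{-n}$ sits in the single sub-cell $j=0$, so the left side is about $P_1^{(n)}P_1^{(m)}$ while the right side is $c\,2^{-r}P_1^{(n)}P_1^{(m)}$, forcing $c\ge 2^r$. You flag this, but your proposed repair rests on ``$p_{ij}^{(m)} \le P_i^{(m)}/2^r$'' as a trivial bound, and this is the very same flatness claim applied to the $Y$-marginal; it fails in exactly the same buckets whenever $m$ is comparable to $n$. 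Nor can you simply drop the $1/2^r$ and sum $P_i^{(n)}P_i^{(m)}$ over the problematic buckets, because that sum is a constant independent of $r$ (the bucket $i\approx\log_2 n$ alone contributes a constant), and a factor of $2^{-r}$ must survive to reach the claimed bound. The correct repair is to trade one cell width against the supremum of the density rather than against $P_i$: since the density of $X$ is decreasing, $\max_j p_{ij}^{(n)} \le n(1-2^{-i})^{n-1}\cdot 2^{-(i+r)} \le 2^{-r}$ because $nu(1-u)^{n-1}\le 1$ for all $u\in[0,1]$; then $\sum_j p_{ij}^{(n)}p_{ij}^{(m)} \le 2^{-r}P_i^{(m)}$, and summing over $i$ bounds the entire non-saturated contribution by $2^{-r}$ with no case split and a better constant than either your sketch or the paper's. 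As written, however, your argument does not close.
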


Proving this will require a few technical lemmas, which we will then use to prove the main theorems.
\begin{lemma}
    \begin{align*}
        \E \gamma(n, m) = & \quad \prob(f_{0,q,r}(A)_0 = f_{0,q,r}(B)_0)  \\ 
=         \sum_{i=1}^{2^q-1} \sum_{j=0}^{2^r-1} 
        &         \left[\left(1-\frac{2^r+j}{2^{r+i}}\right)^n
             -\left(1-\frac{2^r+j+1}{2^{r+i}}\right)^n\right] \cdot \\ &
          \left[\left(1-\frac{2^r+j}{2^{r+i}}\right)^m
             -\left(1-\frac{2^r+j+1}{2^{r+i}}\right)^m\right] \\ 
        +  \sum^{2^q}_{i=2^q} \sum_{j=0}^{2^r-1}
        & \left[\left(1-\frac{j}{2^{r+i-1}}\right)^n
             -\left(1-\frac{j+1}{2^{r+i-1}}\right)^n\right] \cdot \\ &
          \left[\left(1-\frac{j}{2^{r+i-1}}\right)^m
             -\left(1-\frac{j+1}{2^{r+i-1}}\right)^m\right] 
    \end{align*}
    \label{thm:collision}
\end{lemma}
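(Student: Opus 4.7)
The first equality is immediate from the definition of $\gamma$ as an indicator: $\E \gamma(n,m) = \prob\bigl(Z_{0,q,r}(A,B,0) = 1\bigr) = \prob\bigl(f_{0,q,r}(A)_0 = f_{0,q,r}(B)_0\bigr)$. So the real content is identifying the second (explicit) sum with that probability. My plan is to carve $[0,1)$ into the cells of the HyperMinHash encoding $\hat h_{q,r}$, compute the probability that the minimum hash of each set falls in a given cell, and sum the product over cells.

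First, I would determine exactly which interval of hash values maps to each encoded pair $(i,j) \in \{1,\dots,2^q\} \times \{0,1,\dots,2^r-1\}$. For $1 \le i < 2^q$, the condition $\rho_q(x)=i$ is equivalent to $x \in [2^{-i},\, 2^{-(i-1)})$, and within that interval the mantissa $\sigma_r(x 2^{i}-1)=j$ cuts out the sub-interval
\[
I_{i,j} \;=\; \Bigl[\tfrac{2^r+j}{2^{r+i}},\ \tfrac{2^r+j+1}{2^{r+i}}\Bigr).
\]
For the capped case $i=2^q$, the condition $\rho_q(x)=2^q$ is equivalent to $x \in [0,\, 2^{-(2^q-1)})$, and splitting that by the $r$ mantissa bits gives
\[
I_{2^q,j} \;=\; \Bigl[\tfrac{j}{2^{r+2^q-1}},\ \tfrac{j+1}{2^{r+2^q-1}}\Bigr).
\]
A quick sanity check shows that as $(i,j)$ ranges over all pairs, the $I_{i,j}$ partition $[0,1)$.

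Second, I would use the fact that for $n$ i.i.d. $\mathrm{Uniform}[0,1]$ random variables, the minimum has cdf $F_n(x)=1-(1-x)^n$, so $\prob(\min \in [a,b)) = (1-a)^n - (1-b)^n$. Because $A$ and $B$ are disjoint and the hash function is a random oracle, the hash images $h(A)$ and $h(B)$ are independent families of i.i.d. uniform variables; in particular $\min h(A)$ and $\min h(B)$ are independent. A collision $f_{0,q,r}(A)_0=f_{0,q,r}(B)_0$ occurs iff both minima land in the same cell $I_{i,j}$, so
\[
\prob\bigl(f_{0,q,r}(A)_0=f_{0,q,r}(B)_0\bigr) \;=\; \sum_{i=1}^{2^q}\sum_{j=0}^{2^r-1}\prob\bigl(\min h(A) \in I_{i,j}\bigr)\,\prob\bigl(\min h(B) \in I_{i,j}\bigr).
\]
Substituting the interval endpoints above yields precisely the two summations in the lemma statement (the first for $1\le i \le 2^q-1$, the second for $i=2^q$).

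The only genuinely delicate step is the cell description at the cap $i=2^q$: the mantissa rule in the definition is written as if the leading $1$ were at position $i$, but once $\rho_q$ saturates, the cell geometry changes and one has to verify that the $r$ mantissa bits really do carve $[0,\,2^{-(2^q-1)})$ into $2^r$ uniform sub-intervals of length $2^{-(r+2^q-1)}$. Once that bookkeeping is done, the rest is a direct substitution of the uniform-minimum cdf and independence, with no asymptotics required.
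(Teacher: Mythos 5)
Your proposal is correct and follows essentially the same route as the paper: identify the preimage interval $[s_1,s_2)$ of each encoded pair $(i,j)$ (with the separate endpoint formulas for $i<2^q$ and the saturated case $i=2^q$), use independence of the two minima of i.i.d.\ uniforms on disjoint sets, and sum the product of the cdf differences $\left[(1-s_1)^n-(1-s_2)^n\right]\left[(1-s_1)^m-(1-s_2)^m\right]$ over all cells. The paper merely phrases this as integrating the joint pdf over diagonal squares, which yields the identical product form, so there is no substantive difference.
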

\begin{proof}
    Let $a_1, \ldots, a_n$ be random variables corresponding to the hashed values of items in $A$. Then $a_i \in [0,1]$ are uniform r.v. Similarly, $b_1, \ldots, b_m$, drawn from hashed values of $B$ are uniform $[0,1]$ r.v.
    Let $x = \min \{ a_1, \ldots, a_n \}$ and $y = \min \{b_1, \ldots, b_m\}$. 
    Then we have probability density functions
    \begin{align*}
        \pdf (x) = n (1-x)^{n-1}, \textrm{ for } x \in [0,1], \\
        \pdf (y) = m (1-y)^{m-1}, \textrm{ for } y \in [0,1]
    \end{align*}
    and cumulative density functions
    \begin{align*}
        \cdf(x) = 1 - (1-x)^n, \textrm{ for } x \in [0,1], \\
        \cdf(y) = 1 - (1-y)^m, \textrm{ for } y \in [0,1].
    \end{align*}
    We are particularly interested in the joint probability density function
    \begin{equation*}
        \pdf (x, y) = n(1-x)^{n-1} m(1-y)^{m-1}, \textrm{ for } (x, y) \in [0,1]^2.
    \end{equation*}
    The probability mass enclosed in a square along the diagonal $ S = [s_1, s_2]^2 \subset [0,1]^2$ is then precisely
    \begin{align}
        \mu(S) &= \int_{s_1}^{s_2} \int_{s_1}^{s_2} n(1-x)^{n-1} m(1-y)^{m-1} dy dx \nonumber \\ &
        = \left[ (1 - s_2)^n - (1-s_1)^n  \right]
           \left[ (1 - s_2)^m - (1-s_1)^m  \right]
           \label{eqn:integral_box}
    \end{align}
    Recall $f_{0,q,r}(A)_0 \in \{1, \ldots, 2^q\} \times [0, 1]^r \equiv \{1, \ldots, 2^q\} \times \\ \{0, \ldots, 2^r -1\}$, so
    given $f_{0,q,r}(A)_0 = (i, j)$, $x = 0.0^00^i1j\cdots$ in the binary expansion, unless $i = 2^q $, in which case the binary expansion is $x = 0.0^00^ij\cdots$.
    That in turn gives $  s_1 <x< s_2$, where $s_1 = \frac{2^r + j}{2^{r+i}}, s_2 = \frac{2^r+j+1}{2^{r+i}}$ when $i < 2^q$, and $s_1 = \frac{j}{2^{r+i-1}}, s_2 = \frac{j+1}{2^{r+i-1}}$.
    Collisions happen precisely when $ s_1 < x, y < s_2$.

    Finally, using the $s_1, s_2$ formulas above, it suffices to sum the probability of collision over the image of $f$, so
    \begin{equation}
        \E \gamma (n, m ) = \sum_{i=1}^{2^q} \sum_{j=0}^{2^r-1} \mu([s_1, s_2]).
    \end{equation}
    Substituting in for $s_1, s_2,$ and $\mu$ completes the proof.
    Note also that this is precisely the sum of the probability mass in the red and purple squares along the diagonal in Figure \ref{fig:hmh-in-practice}.
\end{proof}

\begin{figure}[tb]
    \includegraphics[width=1\columnwidth]{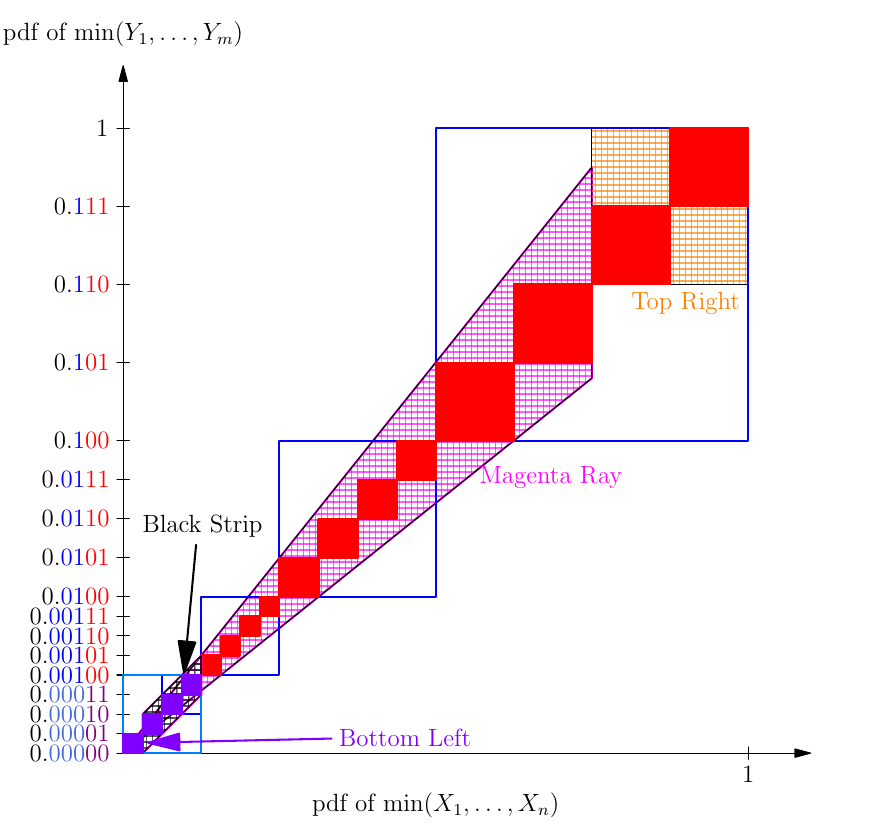}
    \caption{We will upper bound the collision probability of HyperMinHash by dividing it into these four regions of integration: (a) the Top Right orange box, (b), the magenta ray covering intermediate boxes, (c) the black strip covering all but the final purple box, and (d) the final purple sub-bucket by the origin.}
    \label{fig:approximating-hyperminhash}
\end{figure}

While Lemma \ref{thm:collision} allows us to explicitly compute $\E \gamma(m,n)$, the lack of a closed form solution makes reasoning about it difficult.
Here, we will upper bound the expectation by integrating over four regions of the unit square that cover all the collision boxes (Figure \ref{fig:approximating-hyperminhash}).
For ease of notation, let $\hr = 2^r$ and $\hhq = 2^{2^q}$.
\begin{itemize}
    \item The \textbf{T}op \textbf{R}ight box $TR = [\frac{\hr}{\hr+1}, 1]^2$ (in orange in Figure \ref{fig:approximating-hyperminhash}).
    \item The magenta triangle from the origin bounded by the lines $y = \frac{\hr}{\hr+1}x$ and $y = \frac{\hr+1}{\hr}x$ with $0<x<\frac{\hr}{\hr+1}$, which we will denote $RAY$.
    \item The black strip near the origin covering all the purple boxes except the one on the origin, bounded by the lines
        $y = x - \frac{1}{\hr\hhq}$, $y = x + \frac{1}{\hr\hhq}$, and $\frac{1}{\hr\hhq} <x< \frac{1}{\hhq}$, which we will denote $STRIP$.
    \item The \textbf{B}ottom \textbf{L}eft purple box $BL = [0, \frac{1}{\hr\hhq}]^2$.
\end{itemize}

\begin{lemma}
    \label{thm:tr}
    The probability mass contained in the top right square $\mu(TR) \le \frac{1}{\hr}$.
\end{lemma}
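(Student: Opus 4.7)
The plan is to apply equation~(\ref{eqn:integral_box}), which was established in the proof of Lemma~\ref{thm:collision}, directly to the square $TR$. Although that identity was derived while enumerating HyperMinHash sub-buckets, the double integral that justifies it is valid over any diagonal square $[s_1,s_2]^2 \subset [0,1]^2$, so I simply take $s_1 = \hr/(\hr+1)$ and $s_2 = 1$, matching $TR$.

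With these endpoints, $1 - s_2 = 0$ and $1 - s_1 = 1/(\hr+1)$, so each of the two factors in the product becomes $-(\hr+1)^{-n}$ and $-(\hr+1)^{-m}$ respectively. Multiplying them yields
\begin{equation*}
    \mu(TR) = (\hr+1)^{-(n+m)}.
\end{equation*}
Since we may assume $n,m \ge 1$ (if either set is empty the whole collision problem is vacuous), we have $n+m \ge 2$, so $\mu(TR) \le (\hr+1)^{-2} \le 1/\hr$; in fact even the coarser bound $\mu(TR) \le 1/(\hr+1) \le 1/\hr$ from $n+m \ge 1$ already suffices.

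There is essentially no obstacle: the lemma reduces to one substitution into an already-derived integral formula, followed by an elementary inequality. The only thing worth sanity-checking, since $TR$ will be used together with $RAY$, $STRIP$, and $BL$ to cover the entire collision region in the upcoming proof of Proposition~\ref{thm:gamma}, is that $TR$ really does contain the actual top-right collision sub-bucket $[1 - 1/(2\hr),\,1]^2$; this reduces to $1/(2\hr) \le 1/(\hr+1)$, which holds for all $\hr \ge 1$, and so $TR$ is a legitimate (slightly loose) outer approximation of that corner region.
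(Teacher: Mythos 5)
Your proof is correct and essentially identical to the paper's: both substitute $s_1 = \hr/(\hr+1)$, $s_2 = 1$ into the box-mass formula (Equation~\ref{eqn:integral_box}) to get $\mu(TR) = (\hr+1)^{-(n+m)}$ and then bound this by $1/\hr$. Your extra containment check of the actual top-right sub-bucket is a harmless (and correct) addition.
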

\begin{proof}
    By Equation \ref{eqn:integral_box}, 
    \begin{align*}
        \mu(TR) &= 
        \int^{1}_{\frac{\hr}{\hr+1}} \int^{1}_{\frac{\hr}{\hr+1}}  %
            n (1-x)^{n-1} m (1-y)^{m-1} dy dx \\ &
            = \left[ -(1-x)^n  \right]^1_{\frac{\hr}{\hr+1}}
               \left[ -(1-y)^m  \right]^1_{\frac{\hr}{\hr+1}} %\\ &
            =\frac{1}{(\hr+1)^{n+m}} \le \frac{1}{\hr}. %\qedhere
    \end{align*}
\end{proof}

\begin{lemma}
    \label{thm:bl}
    The probability mass contained in the bottom left square near the origin is
$        \mu(BL) \le \frac{n}{\hr\hhq}$.
\end{lemma}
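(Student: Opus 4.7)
The plan is to invoke the probability-mass formula from Equation~\ref{eqn:integral_box} directly with $s_1 = 0$ and $s_2 = \frac{1}{\hr\hhq}$, and then bound each of the two resulting factors by an elementary inequality. The square $BL$ is exactly the sub-bucket closest to the origin, so its mass is obtained as a single instance of Equation~\ref{eqn:integral_box} with no summation.

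First, I would substitute the corner coordinates of $BL$ into Equation~\ref{eqn:integral_box}, which yields
\begin{equation*}
\mu(BL) = \left[1 - \left(1 - \tfrac{1}{\hr\hhq}\right)^n\right] \left[1 - \left(1 - \tfrac{1}{\hr\hhq}\right)^m\right].
\end{equation*}
Next, I would apply Bernoulli's inequality $(1-x)^k \ge 1 - kx$, valid for $x \in [0,1]$ and integer $k \ge 0$, to the first factor with $x = \frac{1}{\hr\hhq}$ and $k = n$; this gives $1 - (1 - \tfrac{1}{\hr\hhq})^n \le \frac{n}{\hr\hhq}$. The second factor is bounded trivially by $1$, since the expression $1 - (1-x)^m$ with $x \in [0,1]$ is itself a probability and therefore lies in $[0,1]$.

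Multiplying the two bounds completes the argument and yields $\mu(BL) \le \frac{n}{\hr\hhq}$. There is no real obstacle: the step that looks lossy, namely bounding the $m$-factor by $1$, is in fact appropriate here because we are assuming $n > m$ and because the final statement of Theorem~\ref{thm:main} is only sensitive to the larger cardinality $n$. A sharper bound such as $\frac{nm}{(\hr\hhq)^2}$ would improve the constant but not the asymptotic form carried through to the main theorem.
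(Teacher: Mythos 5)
Your proposal is correct and follows essentially the same route as the paper: substitute the corners of $BL$ into Equation~\ref{eqn:integral_box} and apply the linear (Bernoulli) bound $1-(1-x)^k \le kx$. The only cosmetic difference is that the paper bounds both factors linearly to get $\frac{nm}{\hr^2\hhq^2}$ and then uses $m < \hr\hhq$, whereas you bound the $m$-factor by $1$ directly; both yield $\mu(BL) \le \frac{n}{\hr\hhq}$.
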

\begin{proof}
    \begin{align*}
        \mu(BL) = &\int_{0}^{\frac{1}{\hr\hhq}} \int_{0}^{\frac{1}{\hr\hhq}}  %
            n (1-x)^{n-1} m (1-y)^{m-1} dy dx  \\
        = & \left[ -(1-x)^n  \right]_0^{\frac{1}{\hr\hhq}}
         \left[ -(1-y)^m  \right]_0^{\frac{1}{\hr\hhq}} \\
        = & \left[ 1- \left(1 - \frac{1}{\hr\hhq}\right)^n  \right]
            \left[ 1- \left(1 - \frac{1}{\hr\hhq}\right)^m  \right].
    \end{align*}
    For $n, m < \hr\hhq$, we note that the linear binomial approximation is actually a strict upper bound (trivially verified through the Taylor expansion), so
        $\mu(BL) \le \frac{nm}{\hr^2\hhq^2} \le \frac{n}{\hr\hhq}$ .
\end{proof}

\begin{lemma}
    \label{thm:ray}
    The probability mass of the ray from the origin can be bounded
    $\mu(RAY) \le \frac{3}{\hr}$ .
\end{lemma}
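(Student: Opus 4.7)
The plan is to reduce the double integral $\mu(RAY) = \iint_{RAY} nm(1-x)^{n-1}(1-y)^{m-1}\,dy\,dx$ to a single-variable Beta-type integral. Set $\alpha = \hr/(\hr+1)$ and $\beta = (\hr+1)/\hr$, so that $RAY = \{(x,y) : 0 < x < \alpha,\ \alpha x < y < \beta x\}$, and observe that $\beta x \le 1$ throughout the region, so the joint density $nm(1-x)^{n-1}(1-y)^{m-1}$ derived in Lemma \ref{thm:collision} applies directly.

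First I would carry out the inner $y$-integration with $x$ fixed, obtaining in closed form
\[\int_{\alpha x}^{\beta x} m(1-y)^{m-1}\,dy \;=\; (1-\alpha x)^m - (1-\beta x)^m.\]
Unlike for $TR$ and $BL$, the limits depend on $x$, so the double integral does not factor into a product of one-dimensional integrals. The crucial step is then to bound this difference of $m$-th powers via the elementary mean-value estimate $a^m - b^m \le m(a-b)a^{m-1}$, valid for $a \ge b \ge 0$, applied with $a = 1-\alpha x$ and $b = 1-\beta x$. Since $\beta x - \alpha x = (\beta-\alpha)x$, this yields
\[(1-\alpha x)^m - (1-\beta x)^m \;\le\; m(\beta-\alpha)x(1-\alpha x)^{m-1} \;\le\; m(\beta-\alpha)x,\]
where in the last step I bound $(1-\alpha x)^{m-1} \le 1$.

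Substituting this back into the outer $x$-integral and enlarging the range to $[0,1]$, the remaining integral is a Beta function that evaluates exactly:
\[\mu(RAY) \;\le\; nm(\beta-\alpha)\int_0^1 x(1-x)^{n-1}\,dx \;=\; nm(\beta-\alpha)\,B(2,n) \;=\; \frac{m(\beta-\alpha)}{n+1}.\]
Finally, $\beta - \alpha = \frac{1}{\hr} + \frac{1}{\hr+1} \le \frac{2}{\hr}$, and $m \le n$ by the standing assumption, so $\mu(RAY) \le \frac{2}{\hr} \le \frac{3}{\hr}$.

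The main subtle point is the choice of endpoint in the mean-value bound: keeping $a = 1-\alpha x$ (the \emph{larger} of the two) is what lets the $(1-\alpha x)^{m-1}$ factor be bounded uniformly by $1$ while simultaneously retaining the factor of $x$ coming from the wedge width. Without that factor of $x$, the outer integral would only give $B(1,n) = 1/n$ rather than $B(2,n) = 1/(n(n+1))$, and the $m$ in the prefactor would fail to cancel once we used $m \le n$. Everything else is routine calculus.
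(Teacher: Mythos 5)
Your proof is correct, and it follows the paper's strategy through the first two steps: integrate out $y$ over the wedge to get $(1-\alpha x)^m-(1-\beta x)^m$, then control that difference with the elementary difference-of-powers bound. Where you diverge is the endgame. The paper keeps the factor $\bigl(1-\tfrac{\hr}{\hr+1}x\bigr)^{m-1}$, merges it with $(1-x)^{n-1}$ into $(1-\tfrac{\hr}{\hr+1}x)^{n+m-2}$, and grinds through an integration by parts to arrive at $\frac{(2\hr+1)(\hr+1)}{\hr^3}\cdot\frac{nm}{(n+m)(n+m-1)}\le\frac{3}{\hr}$; you instead discard $(1-\alpha x)^{m-1}\le 1$, extend the outer integral to $[0,1]$, and evaluate $\int_0^1 x(1-x)^{n-1}\,dx = B(2,n)=\frac{1}{n(n+1)}$ exactly, landing at $\frac{m(\beta-\alpha)}{n+1}\le\frac{2}{\hr}$ via $m\le n$. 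Your route is shorter, avoids the integration by parts (which in the paper contains typos), and even yields a slightly better constant. The one thing you lose is the explicit $\frac{nm}{(n+m)(n+m-1)}$ dependence on the two cardinalities, which the paper reuses in its \textsc{ApproxExpectedCollisions} heuristic (the $\frac{4nm}{(n+m)(n+m-1)}$ correction factor is cited as coming from this lemma); your bound's $\frac{m}{n+1}$ dependence would not serve that later purpose as well. For proving the stated inequality itself, your argument is complete and sound, including the checks that $\beta x\le 1$ on the region (so the lower endpoint $1-\beta x$ stays nonnegative) and that keeping the larger base $1-\alpha x$ in the mean-value estimate is what preserves the factor of $x$ needed for the Beta integral.
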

\begin{proof}
    Unfortunately, the ray is not aligned to the axes, so we cannot integrate $x$ and $y$ separately.
    \begin{align*}
        &\mu(RAY) =   \int^{\frac{\hr}{\hr+1}}_{0} \int^{\frac{\hr+1}{\hr} x}_{\frac{\hr}{\hr+1}x} 
            n (1-x)^{n-1} m (1-y)^{m-1} dy dx  \\
        &=      \int^{\frac{\hr}{\hr+1}}_0 n(1-x)^{n-1} \left[
                \left( 1 - \frac{\hr}{\hr+1}x \right)^m -
                \left( 1 - \frac{\hr+1}{\hr}x \right)^m
                \right] dx
    \end{align*}
    Using the elementary difference of powers formula, note that for $0 \le \alpha \le \beta \le 1$,
    \begin{equation*}
        \alpha^m - \beta^m = (\alpha - \beta) \left(\sum_{i=1}^{m} \alpha^{m-i}\beta^{i-1} \right) \le (\alpha - \beta)m\beta^{m-1} .
    \end{equation*}
    With a bit of symbolic manipulation, we can conclude that
    \begin{align*}
        &\mu(RAY) \\ &\le  \int^{\frac{\hr}{\hr+1}}_0 n(1-x)^{n-1} 
        \left[ \frac{2\hr + 1}{\hr(l+1)} x m \left( 1 - \frac{\hr}{\hr+1}x \right)^{m-1}  \right] dx \\ &
        \le \frac{2\hr + 1}{\hr(l+1)}  \int^{\frac{\hr}{\hr+1}}_0 nm\left(1-\frac{\hr}{\hr+1}\right)^{n + m - 2}  x dx .
    \end{align*}

    With a straight-forward integration by parts,% and then upper bounding negative terms by 0,
    \begin{align*}
        &\mu(RAY) \\ \le & - \frac{2\hr+1}{\hr^2} \cdot\frac{nm}{n+m-1} \cdot  \frac{\hr}{\hr+1}\left( 1 - \frac{\hr^2}{(\hr+1)^2} \right)^{n+m-1} \\
           & - \frac{2\hr+1}{\hr^2} \cdot\frac{nm}{n+m-1} \cdot \frac{\hr+1}{\hr} \cdot \frac{1}{n+m}\left( 1 - \frac{\hr^2}{(\hr+1)^2} \right)^{n+m-1} \\
           & + \frac{2\hr+1}{\hr^2} \cdot\frac{nm}{n+m-1} \cdot \frac{\hr+1}{\hr} \cdot \frac{1}{n+m} \\
        \le & \frac{(2\hr+1)(\hr+1)}{\hr^3} \cdot \frac{nm}{(n+m)(n+m-1)} \le \frac{3}{\hr}. \qedhere
    \end{align*}
\end{proof}

\begin{lemma}
    \label{thm:strip}
    The probability mass of the diagonal strip near the origin is
        $\mu(STRIP) \le \frac{2}{\hr}$ .
\end{lemma}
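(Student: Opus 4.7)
The plan is to mirror the integration strategy from Lemma \ref{thm:ray}: set up the double integral over the strip, bound the inner integral via the elementary difference-of-powers inequality, collapse the mixed-exponent factors into a single power, and then reduce the outer integral to a standard form.

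Writing $c = 1/(\hr\hhq)$ for brevity, I would start from
\[\mu(STRIP) = \int_c^{1/\hhq}\!\int_{x-c}^{x+c} n(1-x)^{n-1} m(1-y)^{m-1}\, dy\, dx.\]
The inner $y$-integral evaluates exactly to $(1-x+c)^m - (1-x-c)^m$. Applying the inequality $\beta^m - \alpha^m \le m(\beta-\alpha)\beta^{m-1}$ used in Lemma \ref{thm:ray}, with $\beta = 1-x+c$ (which is at most $1$ since $x \ge c$) and $\alpha = 1-x-c \ge 0$, the inner quantity is bounded by $2cm(1-x+c)^{m-1}$.

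Next I would combine the two $x$-dependent power factors. Since $1-x \le 1-x+c$ and $n-1 \ge 0$, we have $(1-x)^{n-1} \le (1-x+c)^{n-1}$, so the integrand collapses into a single power $(1-x+c)^{n+m-2}$. The substitution $w = 1-x+c$ then reduces the outer integral to
\[\mu(STRIP) \le 2cmn \int_{1-1/\hhq+c}^{1} w^{n+m-2}\, dw = \frac{2cmn}{n+m-1}\big[1 - (1-1/\hhq+c)^{n+m-1}\big] \le \frac{2cmn}{n+m-1}.\]
The elementary inequality $mn/(n+m-1) \le m$ (valid whenever $n \ge 1$) then gives $\mu(STRIP) \le 2cm = 2m/(\hr\hhq)$.

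The main obstacle is the final step: passing from $2m/(\hr\hhq)$ to the stated bound $2/\hr$ requires the implicit cardinality assumption $m \le \hhq$, the same HyperLogLog precision regime tacitly used in Lemma \ref{thm:bl}. Since HyperMinHash is designed to operate inside this regime (the $\rho$-counter saturates once cardinalities approach $\hhq$, and the $\mu(BL)$ term of the overall bound absorbs any degeneracy there), this assumption is natural and self-consistent. Apart from this, the only care needed is verifying $\beta \le 1$ in the difference-of-powers step, which follows immediately from the geometric constraint $x \ge c$ defining the strip; no deeper machinery appears necessary.
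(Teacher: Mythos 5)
Your proof follows essentially the same route as the paper's: the same double integral over the strip, the same difference-of-powers bound on the inner $y$-integral, the same collapse of $(1-x)^{n-1}(1-x+c)^{m-1}$ into the single power $(1-x+c)^{n+m-2}$, and the same outer integration producing a factor $\frac{nm}{n+m-1}$. The implicit cardinality cap you flag ($m \le \hhq = 2^{2^q}$) is indeed also tacit in the paper's version, whose final displayed step likewise only closes under a bound of this kind (and is in fact written as an equality when it is not one), so your accounting of that step is the more careful of the two.
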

\begin{proof}
    Using the same integration procedure and difference of powers formula used in the proof of Lemma \ref{thm:ray},
    \begin{align*}
        \mu(STRIP) &= \int_{\frac{1}{\hr\hhq}}^{\frac{1}{\hhq}} \int_{x - \frac{1}{\hr\hhq}}^{x + \frac{1}{\hr\hhq}}
            n(1-x)^{n-1}m(1-y)^{m-1} dy dx \\
        &\le \frac{2}{\hr\hhq} \int_{\frac{1}{\hr\hhq}}^{\frac{1}{\hhq}} nm \left( 1 - x + \frac{1}{\hr\hhq} \right)^{n+m-2} \\&
             = \frac{2}{\hr\hhq} \cdot \frac{nm}{n+m-1} \cdot \frac{\hr-1}{\hr\hhq} \le \frac{2}{\hr} .
    \end{align*}
\end{proof}

\begin{proof}[Proof of Proposition \ref{thm:gamma}.]
    Summing bounds from Lemmas \ref{thm:tr}, \ref{thm:bl}, \ref{thm:ray}, and \ref{thm:strip},
        $\E \gamma(n,m) \le \frac{6}{\hr} + \frac{n}{\hr\hhq} = \frac{6}{2^r} + \frac{n}{2^{2^q+r}} .$
\end{proof}

\begin{proof}[Proof of Theorem \ref{thm:main}.]

    Let $A_i$, $B_i$ be the $i$th partitions of $A$ and $B$ respectively. % and let $\alpha_i = |A_i|$, $\beta_i = |B_i|$.
    For ease of notation, let us define $\hp = 2^p$.
    Recall that
        $
        C = \sum_{j=0}^{\hp-1} Z_{p,q,r}(A,B,j) .
        $
    We will first bound $\E Z_{p,q,r}(A,B,j)$ using the same techniques used in Proposition \ref{thm:gamma}.
    Notice first that $Z_{p,q,r}(A,B,j)$ effectively rescales the minimum hash values from $Z_{0,q,r}(A,B,j) = \gamma(n,m)$ down by a factor of $2^p$;
    i.e. we scale down both the axes in Figure \ref{fig:approximating-hyperminhash} by substituting $\hhq \gets \hhq \hp$ in
    Lemmas \ref{thm:strip} and \ref{thm:bl}.
    We do not need Lemma \ref{thm:tr} because its box is already covered by the Magenta Ray from Lemma \ref{thm:ray}, which we do not scale.
    Summing these together, we readily conclude
    $    \E Z_{p,q,r}(A,B,j) \le \frac{5}{\hr} + \frac{n}{\hr\hhq\hp} = \frac{5}{2^r} + \frac{n}{2^{p+2^q+r}}$ .
    Then by linearity of expectation,
        $
        \E C \le 
        2^p \left[ \frac{5}{2^r} + \frac{n}{2^{p+2^q+r}} \right] .
        $
\end{proof}

\begin{proof}[Proof of Theorem \ref{thm:variance}.]
    By conditioning on the multinomial distribution, we can decompose $C$ into
    \begin{align*}
        C & = 
        \sum_{\substack{ \alpha_1 + \cdots \alpha_{\hp} = n \\ \beta_1 + \cdots \beta_{\hp} = m } }
             \mathbbm{1}_{\left(\substack{ \forall i, |A_i| = \alpha_i \\ \forall i, |B_i| = \beta_i} \right)}
              \sum_{i=0}^{\hp - 1}
              Z_{p,q,r} \left(A,B,j \biggr| \substack{|A_i|=\alpha_i \\ |B_i|=\beta_i}  \right).
    \end{align*}

    For ease of notation in the following, we will use $\mathring{\alpha}, \mathring\beta$ to denote the event $\forall i, |A_i| = \alpha_i$ and $\forall i, |B_i| = \beta_j$ respectively. Additionally, let $\mathring{Z}(j) = Z_{p,q,r}(A,B,j)$. So,
        $
        C =
            \sum_{\mathring\alpha, \mathring\beta}
            \sum_{j=0}^{\hp-1}
            \mathbbm{1}_{\mathring\alpha, \mathring\beta} 
            \mathring{Z} \left(j \big|\mathring\alpha, \mathring\beta \right) .
            $
    
    Then
    \vspace{-1em}
    \begin{align*}
        & \var(C) = \\  &
            \sum_{\substack{\mathring\alpha_1, \mathring\beta_1 \\
                            \mathring\alpha_2, \mathring\beta_2 }}
            \sum_{j_1, j_2=0}^{\hp-1}
            \cov\left(
                \mathbbm{1}_{\mathring\alpha_1, \mathring\beta_1} 
                \mathring{Z} \left(j_1 \big|\mathring\alpha_1, \mathring\beta_1 \right)
                ,
                \mathbbm{1}_{\mathring\alpha_2, \mathring\beta_2} 
                \mathring{Z} \left(j_2 \big|\mathring\alpha_2, \mathring\beta_2 \right)
            \right) .
    \end{align*}

    But note that for $(\mathring\alpha_1, \mathring\beta_1) \ne (\mathring\alpha_2, \mathring\beta_2)$,
        $
        \mathbbm{1}_{\mathring\alpha_1, \mathring\beta_1} = 1 \implies
        \mathbbm{1}_{\mathring\alpha_2, \mathring\beta_2} = 0
        $
    and vice versa, because they are disjoint indicator variables. As such,
    for $(\mathring\alpha_1, \mathring\beta_1) \ne (\mathring\alpha_2, \mathring\beta_2)$,
    \begin{equation*}
        \cov\left(
            \mathbbm{1}_{\mathring\alpha_1, \mathring\beta_1} 
            \mathring{Z} \left(j_1 \big|\mathring\alpha_1, \mathring\beta_1 \right)
            ,
            \mathbbm{1}_{\mathring\alpha_2, \mathring\beta_2} 
            \mathring{Z} \left(j_2 \big|\mathring\alpha_2, \mathring\beta_2 \right)
        \right) \le 0,
    \end{equation*}
    implying that
    \begin{align*}
        & \var(C) \\ \le  &
            \sum_{\substack{\mathring\alpha, \mathring\beta }}
            \sum_{j_1, j_2=0}^{\hp-1}
            \cov\left(
                \mathbbm{1}_{\mathring\alpha, \mathring\beta} 
                \mathring{Z} \left(j_1 \big|\mathring\alpha, \mathring\beta \right)
                ,
                \mathbbm{1}_{\mathring\alpha, \mathring\beta} 
                \mathring{Z} \left(j_2 \big|\mathring\alpha, \mathring\beta \right)
            \right) \\
        = &
            \sum_{\substack{\mathring\alpha, \mathring\beta }}
            \sum_{j=0}^{\hp-1}
            \var \left(
                \mathbbm{1}_{\mathring\alpha, \mathring\beta} 
                \mathring{Z} \left(j \big|\mathring\alpha, \mathring\beta \right)
            \right) \\ &
        +
            \sum_{\substack{\mathring\alpha, \mathring\beta }}
            \sum_{\substack{j_1 \ne j_2 \\ 0 \le j_1 \le \hp -1 \\ 0 \le j_2 \le \hp - 1}}
            \cov\left(
                \mathbbm{1}_{\mathring\alpha, \mathring\beta} 
                \mathring{Z} \left(j_1 \big|\mathring\alpha, \mathring\beta \right)
                ,
                \mathbbm{1}_{\mathring\alpha, \mathring\beta} 
                \mathring{Z} \left(j_2 \big|\mathring\alpha, \mathring\beta \right)
            \right)  .
    \end{align*}

    Note that the first term can be simplified, recalling that 
    $\mathring{Z}$ is a \{0,1\} Bernouli r.v., so
    \begin{align*}
        &\sum_{j=0}^{\hp -1}
            \var \left(
                \mathbbm{1}_{\mathring\alpha, \mathring\beta} 
                \mathring{Z} \left(j \big|\mathring\alpha, \mathring\beta \right)
            \right)
        \\& = 
        \sum_{j=0}^{\hp -1}
                \var \left( \mathring{Z} \left(j \right) \right) 
        \le
        \sum_{j=0}^{\hp -1}
                \E \left[ \mathring{Z} \left(j \right) \right]  = \E C .
    \end{align*}

    Moving on, from the covariance formula, for independent random variables $X_1, X_2, Y$,
    \begin{align*}
        \cov(X_1 Y, X_2 Y) &= \E[X_1 X_2 Y^2] - \E[X_1 Y] \E [X_2 Y] \\ &
                           = \E[X_1]\E[X_2] \left( \E[Y^2] - \E[Y]^2  \right) \\ &
                           = \E[X_1]\E[X_2] \var(Y)
    \end{align*}
    Thus the second term of the summation can be bounded as follows:
    \begin{align*}
        &\sum_{\substack{\mathring\alpha, \mathring\beta }}
        \sum_{\substack{j_1 \ne j_2 \\ 0 \le j_1 \le \hp -1 \\ 0 \le j_2 \le \hp - 1}}
        \cov\left(
            \mathbbm{1}_{\mathring\alpha, \mathring\beta} 
            \mathring{Z} \left(j_1 \big|\mathring\alpha, \mathring\beta \right)
            ,
            \mathbbm{1}_{\mathring\alpha, \mathring\beta} 
            \mathring{Z} \left(j_2 \big|\mathring\alpha, \mathring\beta \right)
        \right) 
        \\ &
        =
        \sum_{\substack{\mathring\alpha, \mathring\beta }}
        \sum_{\substack{j_1 \ne j_2 \\ 0 \le j_1 \le \hp -1 \\ 0 \le j_2 \le \hp - 1}}
        \E \left[ 
            \mathring{Z} \left(j_1 \big|\mathring\alpha, \mathring\beta \right)
        \right]
        \E \left[
            \mathring{Z} \left(j_2 \big|\mathring\alpha, \mathring\beta \right)
        \right]
        \var \left(
            \mathbbm{1}_{\mathring\alpha, \mathring\beta} 
        \right)
        \\
        &\le
        \sum_{\substack{\mathring\alpha, \mathring\beta }}
        \sum_{j_1 = 0}^{\hp - 1}
        \sum_{j_2 = 0}^{\hp - 1}
        \E \left[ 
            \mathring{Z} \left(j_1 \big|\mathring\alpha, \mathring\beta \right)
        \right]
        \E \left[
            \mathring{Z} \left(j_2 \big|\mathring\alpha, \mathring\beta \right)
        \right]
        \var \left(
            \mathbbm{1}_{\mathring\alpha, \mathring\beta} 
        \right)
        \\ &
        =
        \sum_{\substack{\mathring\alpha, \mathring\beta }}
        \E \left[ C | \mathring\alpha, \mathring\beta  \right]^2
        \var \left(
            \mathbbm{1}_{\mathring\alpha, \mathring\beta} 
        \right)
        \\
        &=
        \sum_{\substack{\mathring\alpha, \mathring\beta }}
        \E \left[ C | \mathring\alpha, \mathring\beta  \right]^2
        \prob \left(
            \mathring\alpha, \mathring\beta
        \right)
        \left( 1 - \prob \left(
            \mathring\alpha, \mathring\beta
        \right)
        \right)
        \\ &
        \le
        \sum_{\substack{\mathring\alpha, \mathring\beta }}
        \E \left[ C | \mathring\alpha, \mathring\beta  \right]^2
        \prob \left(
            \mathring\alpha, \mathring\beta
        \right)
        %\\ &
        =
        \E [C]^2
    \end{align*}

    We conclude that
     $   \var (C) \le \E [C]^2 + \E [C]$ .

\end{proof}

\begin{figure*}[tbp]
    \centering
    \includegraphics[width=1\textwidth]{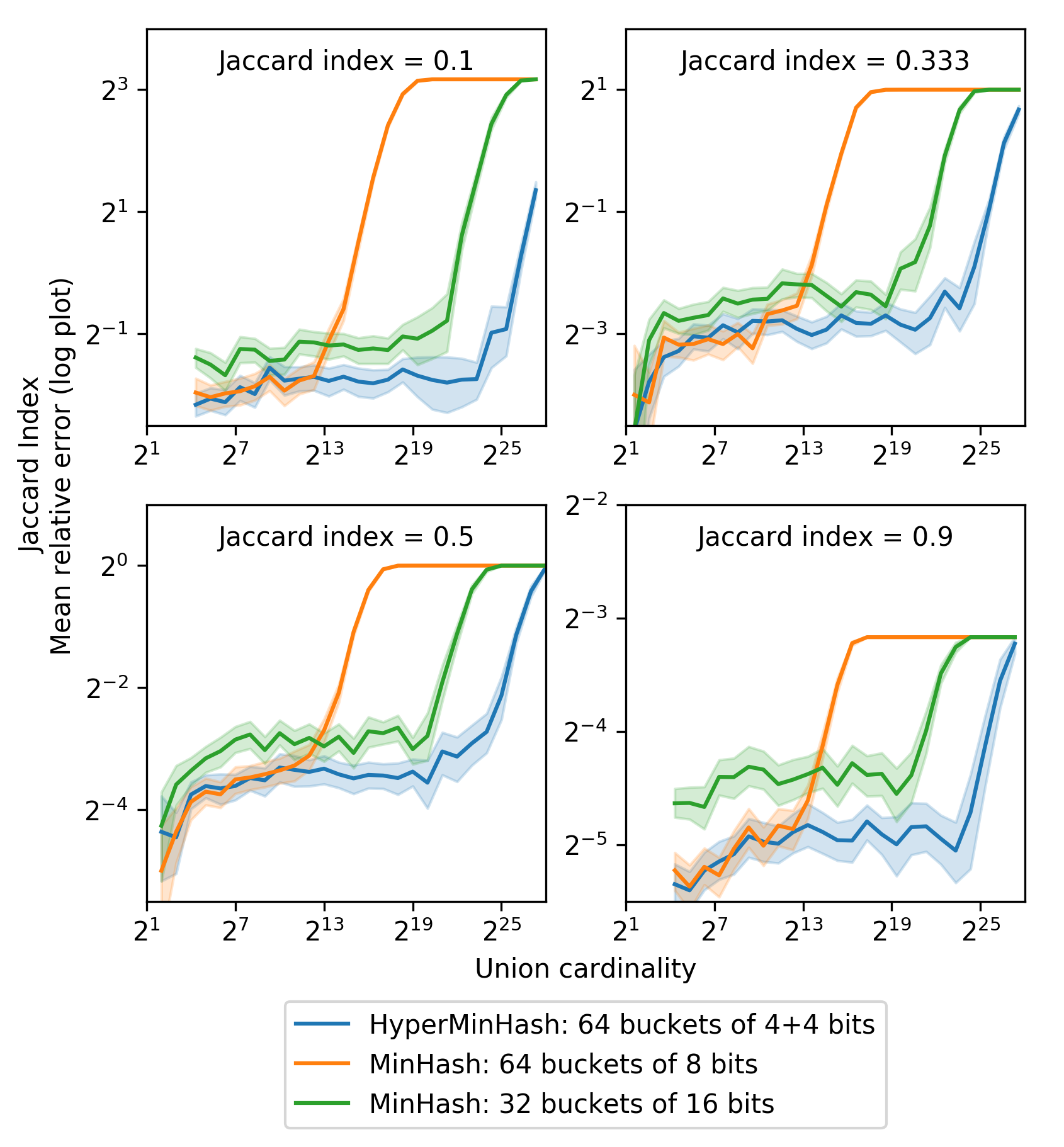}
    \caption{\textbf{For a fixed size sketch, HyperMinHash has better accuracy and/or cardinality range than MinHash.} We compare Jaccard index estimation for identically sized sets with Jaccard index of $0.1$, $1/3$, $0.5$, and $0.9$, and plot the mean relative errors without estimated collision correction. All datasets represent between 8 and 96 random repetitions, as needed to get reasonable 95\% confidence interval (shaded bands). %
    (blue) A 64 byte HyperMinHash sketch, with 64 buckets of 8 bits each, 4 bits of which are allocated to the LogLog counter. Jaccard index estimation remains stable until cardinalities around $2^{21}$. %
    (orange) A 64 byte MinHash sketch with 64 buckets of 8 bits each achieves similar accuracy at low cardinalities, but fails once cardinalities approach $2^{13}$. %
    (green) A 64 byte MinHash sketch with 32 buckets of 16 bits can access larger cardinalities of around $2^{19}$, but to do so trades off on low-cardinality accuracy. }
    \label{fig:experiment}
\end{figure*}

\subsection{Space-complexity analysis}
Note that Theorem~\ref{thm:variance} implies that the standard deviation of the number of collisions is about the same as the number of collisions itself, as bounded in Theorem~\ref{thm:main}.
For Jaccard index $t = J(A,B)$, the absolute error caused by minimum hash collisions is then approximately 
\begin{equation*}
    \frac{\E C}{ 2^p} = \frac{5}{2^r} + \frac{n}{2^{p+2^q+r}}.
\end{equation*}
So long as $n < 2^{2^q}$, the second term is bounded by $1/2^r$.
Then, the absolute error in the Jaccard index estimation from collisions is bounded by $6/2^r$, where the constant $6$ is a gross overestimate (empirically, the constant seems closer to 1).
For any desired relative error $\epsilon$, we then need only let
\begin{equation*}
    r > \log \frac{6}{\epsilon t}.
\end{equation*}
Of course, for relative error $\epsilon$, this then implies that HyperMinHash needs $O(\log\log n + \log(1/\epsilon t))$ bits per bucket.
Additionally, note that $\epsilon^{-2}$ buckets are needed to control the error, completing our proof that HyperMinHash requires
$O\left(\epsilon^{-2} \left( \log\log n + \log \frac{1}{ t  \epsilon} \right)\right)$ 
bits of space.

An astute reader will note that in this space-complexity analysis, we have implicitly assumed that there are no jointly empty buckets in the HyperMinHash sketches of the two sets.
This assumption allows us to use the simpler Jaccard index estimator
\begin{equation*}
    t(A,B) \approx \frac{\textrm{[\# matches]}}{\textrm{[\# buckets]}}.
\end{equation*}
rather than the harder to analyze estimator
\begin{equation*}
    t(A,B) \approx \frac{\textrm{[\# matches]}}{\textrm{[\# buckets]-[\# jointly empty buckets]}}.
\end{equation*}
In the small set regime where there are empty buckets, although our proofs of the number of additional accidental collisions from using HyperMinHash compression still hold, the relative error in the Jaccard index estimator can be higher because there are fewer filled buckets.
However, we note that this does not change the asymptotic results in the large cardinality regime, which we believe of greater interest to practitioners who are considering using HyperMinHash for space-savings (i.e.\ in the small cardinality regime, double logarithmic scaling is of little importance).

\section{Results}
Now that we have presented both detailed algorithms and a theoretical analysis of HyperMinHash, for completeness, we turn to simulated empirical experiments comparing against MinHash.
Given two sets of interest, we first generate one-permutation k-partition MinHash sketches of both.
As detailed in the introduction, this consists of hashing all elements within a set, partitioning them into k buckets, and storing the minimum valued hash within each bucket.
Then, we estimate the Jaccard index by dividing the number of matching buckets by the number of buckets that are not empty in both sketches.
Using HyperMinHash proceeds identically, except that we then compress the minimum valued hash by storing the position of the leading 1 indicator and several bits following it; Jaccard index estimation is identical, except of course that the buckets must match in both the leading 1 indicator position and the additional bits (Figure 1).

In Figure~\ref{fig:experiment}, we allocate 64 bytes for two standard MinHash sketches and a HyperMinHash sketch, and then plot the mean relative error in the Jaccard index for simulated overlapping sets with Jaccard index 0.1, 0.33, 0.5, and 0.9.
Note that for the sake of comparability, we turn off the expected error collision correction of HyperMinHash.
A Python implementation is available on Github (\url{https://github.com/yunwilliamyu/hyperminhash}) for rerunning all experiments and regenerating the figure.

While changing the Jaccard index of the sets changes the baseline mean relative error and the maximum possible mean relative error, the errors for all three methods remain basically stable across the cardinality ranges that they can handle without hash collisions.
We ran HyperMinHash with 64 buckets of 8 bits, with 4 allocated to the leading 1 indicator.
This sketch was able to access cardinalities ranging up to $2^{21}$.
When running MinHash with 64 buckets of 8 bits, it maintains the same error level as HyperMinHash for union cardinalities below $2^{13}$, but then fails.
Using fewer buckets with larger numbers of bits, such as when we ran MinHash with 32 buckets of 16 bits, allows it to access higher cardinalities (around $2^{19}$), but at the cost of a higher baseline error level.
Thus, for fixed sketch size and cardinality range, HyperMinHash is more accurate; or, for fixed sketch size and bucket number, HyperMinHash can access exponentially larger set cardinalities.

\section{Conclusion}
We have introduced HyperMinHash, a compressed version of the MinHash sketch in $\log\log$ space, and made available a prototype Python implementation at \url{https://github.com/yunwilliamyu/hyperminhash}.
It can be thought of as a compression scheme for MinHash that reduces the number of bits per bucket to $\log\log(n)$ from $\log(n)$ by using insights from HyperLogLog and k-partition MinHash.
As with the original MinHash, it retains variance on the order of $k/t$, where $k$ is the number of buckets and $ t $ is the Jaccard index between two sets.
However, it also introduces $1/l^2$ variance, where $l = 2^r$, because of the increased number of collisions.

Alternately, it can be thought of as an extension for sublogarithmic Jaccard index fingerprinting methods such as b-bit MinHash \cite{li2010b}, adding back in the features of streaming updates and union sketching.
Should a practitioner desire only to create a one-time Jaccard index fingerprint of a set, we recommend they use b-bit MinHash; however, we believe that HyperMinHash serves as a better all-around drop-in replacement of MinHash because it preserves more of MinHash's compositional features than other more space-efficient Jaccard index fingerprints.

There remain many theoretical improvements to be made, especially in removing the requirement of a random oracle.
Notably, the KNW sketch \cite{kane2010optimal} sketch improves HyperLogLog to using constant size buckets and a double logarithmic offset, while also using limited randomness.
We envision that their techniques can be extended to HyperMinHash to further reduce space complexity, as the HyperLogLog parts of the buckets are identical to regular HyperLogLog.
Similarly, Feigenblat, et al.~introduce d-k-min-wise hash functions to reduce the amount of necessary randomness for a bottom-k MinHash sketch \cite{feigenblat2017dk}.
Although we have analyzed HyperMinHash as a k-partition variant, the same floating-point encoding can be applied to bottom-k sketches, with similar error behavior.

Luckily, even without these theoretical improvements, HyperMinHash is already practically applicable. For reasonable parameters of $p = 15, q=6, r=10$, the HyperMinHash sketch will use up 64KiB memory per set, and allow for estimating Jaccard indices of 0.01 for set cardinalities on the order of $10^{19}$  with accuracy around 5\%.
HyperMinHash is to our knowledge the first practical streaming summary sketch capable of directly estimating union cardinality, Jaccard index, and intersection cardinality in $\log\log$ space, able to be applied to arbitrary Boolean formulas in conjunctive normal form with error rates bounded by the final result size.
We hope that HyperMinHash as presented in this manuscript will be of utility for Boolean queries on large databases.

% use section* for acknowledgment
\ifCLASSOPTIONcompsoc
  % The Computer Society usually uses the plural form
  \section*{Acknowledgments}
\else
  % regular IEEE prefers the singular form
  \section*{Acknowledgment}
\fi

    This study was supported by National Institutes of Health (NIH) Big Data to Knowledge (BD2K) awards U54HG007963 from the National Human Genome Research Institute (NHGRI) and U01CA198934 from the National Cancer Institute (NCI). The funders had no role in study design, data collection and analysis, decision to publish, or preparation of the manuscript.
    We thank Daphne Ippolito and Adam Sealfon for useful comments and advice, and Seif Lotfy for finding a minor bug in our pseudo-code.
	We also are indebted to all of our reviewers, who kindly but firmly pointed out missing references and hidden assumptions in a previous draft of this manuscript.

\ifCLASSOPTIONcaptionsoff
  \newpage
\fi

\bibliographystyle{IEEEtran}
\bibliography{IEEEabrv,main}

% Generated by IEEEtran.bst, version: 1.14 (2015/08/26)
\begin{thebibliography}{10}
\providecommand{\url}[1]{#1}
\csname url@samestyle\endcsname
\providecommand{\newblock}{\relax}
\providecommand{\bibinfo}[2]{#2}
\providecommand{\BIBentrySTDinterwordspacing}{\spaceskip=0pt\relax}
\providecommand{\BIBentryALTinterwordstretchfactor}{4}
\providecommand{\BIBentryALTinterwordspacing}{\spaceskip=\fontdimen2\font plus
\BIBentryALTinterwordstretchfactor\fontdimen3\font minus
  \fontdimen4\font\relax}
\providecommand{\BIBforeignlanguage}[2]{{%
\expandafter\ifx\csname l@#1\endcsname\relax
\typeout{** WARNING: IEEEtran.bst: No hyphenation pattern has been}%
\typeout{** loaded for the language `#1'. Using the pattern for}%
\typeout{** the default language instead.}%
\else
\language=\csname l@#1\endcsname
\fi
#2}}
\providecommand{\BIBdecl}{\relax}
\BIBdecl

\bibitem{broder1997resemblance}
A.~Z. Broder, ``On the resemblance and containment of documents,'' in
  \emph{Compression and Complexity of Sequences 1997. Proceedings}.\hskip 1em
  plus 0.5em minus 0.4em\relax IEEE, 1997, pp. 21--29.

\bibitem{li2005using}
P.~Li and K.~W. Church, ``Using sketches to estimate associations,'' in
  \emph{Proceedings of the conference on Human Language Technology and
  Empirical Methods in Natural Language Processing}.\hskip 1em plus 0.5em minus
  0.4em\relax Association for Computational Linguistics, 2005, pp. 708--715.

\bibitem{jaccard1902lois}
P.~Jaccard, ``Lois de distribution florale dans la zone alpine,'' \emph{Bull
  Soc Vaudoise Sci Nat}, vol.~38, pp. 69--130, 1902.

\bibitem{bar2002counting}
Z.~Bar-Yossef, T.~Jayram, R.~Kumar, D.~Sivakumar, and L.~Trevisan, ``Counting
  distinct elements in a data stream,'' in \emph{International Workshop on
  Randomization and Approximation Techniques in Computer Science}.\hskip 1em
  plus 0.5em minus 0.4em\relax Springer, 2002, pp. 1--10.

\bibitem{durand2003loglog}
M.~Durand and P.~Flajolet, ``Loglog counting of large cardinalities,'' in
  \emph{European Symposium on Algorithms}.\hskip 1em plus 0.5em minus
  0.4em\relax Springer, 2003, pp. 605--617.

\bibitem{flajolet2004counting}
P.~Flajolet, ``Counting by coin tossings,'' in \emph{Advances in Computer
  Science-ASIAN 2004. Higher-Level Decision Making}.\hskip 1em plus 0.5em minus
  0.4em\relax Springer, 2004, pp. 1--12.

\bibitem{flajolet2007hyperloglog}
P.~Flajolet, {\'E}.~Fusy, O.~Gandouet, and F.~Meunier, ``Hyperloglog: the
  analysis of a near-optimal cardinality estimation algorithm,'' in \emph{AofA:
  Analysis of Algorithms}.\hskip 1em plus 0.5em minus 0.4em\relax Discrete
  Mathematics and Theoretical Computer Science, 2007, pp. 137--156.

\bibitem{kane2010optimal}
D.~M. Kane, J.~Nelson, and D.~P. Woodruff, ``An optimal algorithm for the
  distinct elements problem,'' in \emph{Proceedings of the twenty-ninth ACM
  SIGMOD-SIGACT-SIGART symposium on Principles of database systems}.\hskip 1em
  plus 0.5em minus 0.4em\relax ACM, 2010, pp. 41--52.

\bibitem{bachrach2010fast}
Y.~Bachrach and E.~Porat, ``Fast pseudo-random fingerprints,'' \emph{arXiv
  preprint arXiv:1009.5791}, 2010.

\bibitem{li2010b}
P.~Li and C.~K{\"o}nig, ``b-bit minwise hashing,'' in \emph{Proceedings of the
  19th international conference on World wide web}.\hskip 1em plus 0.5em minus
  0.4em\relax ACM, 2010, pp. 671--680.

\bibitem{bachrach2015fingerprints}
Y.~Bachrach and E.~Porat, ``Fingerprints for highly similar streams,''
  \emph{Information and Computation}, vol. 244, pp. 113--121, 2015.

\bibitem{li2012one}
P.~Li, A.~Owen, and C.-H. Zhang, ``One permutation hashing,'' in \emph{Advances
  in Neural Information Processing Systems}, 2012, pp. 3113--3121.

\bibitem{thorup2013bottom}
M.~Thorup, ``Bottom-k and priority sampling, set similarity and subset sums
  with minimal independence,'' in \emph{Proceedings of the forty-fifth annual
  ACM symposium on Theory of computing}.\hskip 1em plus 0.5em minus 0.4em\relax
  ACM, 2013, pp. 371--380.

\bibitem{feigenblat2011exponential}
G.~Feigenblat, E.~Porat, and A.~Shiftan, ``Exponential time improvement for
  min-wise based algorithms,'' in \emph{Proceedings of the twenty-second annual
  ACM-SIAM symposium on Discrete Algorithms}.\hskip 1em plus 0.5em minus
  0.4em\relax SIAM, 2011, pp. 57--66.

\bibitem{feigenblat2017dk}
------, ``dk-min-wise independent family of hash functions,'' \emph{Journal of
  Computer and System Sciences}, vol.~84, pp. 171--184, 2017.

\bibitem{cohen2016min}
E.~Cohen, ``Min-hash sketches.'' 2016.

\bibitem{pagh2014min}
R.~Pagh, M.~St{\"o}ckel, and D.~P. Woodruff, ``Is min-wise hashing optimal for
  summarizing set intersection?'' in \emph{Proceedings of the 33rd ACM
  SIGMOD-SIGACT-SIGART symposium on Principles of database systems}.\hskip 1em
  plus 0.5em minus 0.4em\relax ACM, 2014, pp. 109--120.

\bibitem{ertl2017new}
O.~Ertl, ``New cardinality estimation algorithms for hyperloglog sketches,''
  \emph{arXiv preprint arXiv:1702.01284}, 2017.

\bibitem{ertl2017new2}
------, ``New cardinality estimation methods for hyperloglog sketches,''
  \emph{arXiv preprint arXiv:1706.07290}, 2017.

\bibitem{cohen2017hyperloglog}
E.~Cohen, ``Hyperloglog hyperextended: Sketches for concave sublinear frequency
  statistics,'' in \emph{Proceedings of the 23rd ACM SIGKDD International
  Conference on Knowledge Discovery and Data Mining}.\hskip 1em plus 0.5em
  minus 0.4em\relax ACM, 2017, pp. 105--114.

\end{thebibliography}

\begin{IEEEbiography}[{\includegraphics[trim={15px 0 10px 0}, width=1in,height=1.25in, clip=true, keepaspectratio]{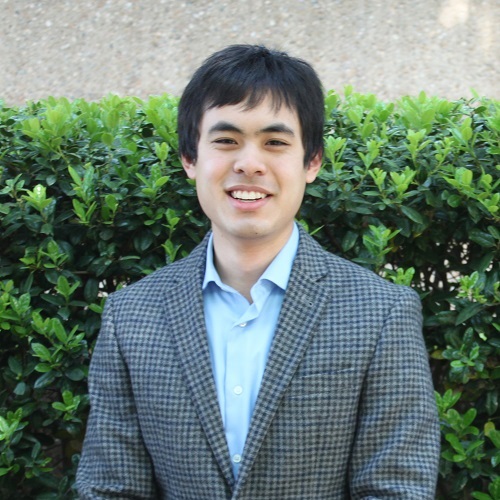}}   ]{Yun William Yu}
%\begin{IEEEbiographynophoto}{Yun William Yu}
    is an assistant professor at the Department of Mathematics, University of Toronto.
%\end{IEEEbiographynophoto}
\end{IEEEbiography}
% if you will not have a photo at all:
\begin{IEEEbiography}[{\includegraphics[ width=1in,height=1.25in, clip,  keepaspectratio]{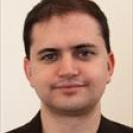}}   ]{Griffin M. Weber}
%\begin{IEEEbiographynophoto}{Griffin M. Weber}
    is an associate professor at the Department of Biomedical Informatics, Harvard Medical School.
%\end{IEEEbiographynophoto}
\end{IEEEbiography}

\end{document}